\DeclareMathAlphabet\gothic{U}{euf}{m}{n}
\newcommand{\M}{\mathbb M}
\newcommand{\MM}{\mathcal M}
\newcommand{\G}{\mathcal G}
\newcommand{\spann}{\operatorname{span}\nolimits}
\newcommand{\x}{\mathbf x}
\newcommand{\RRS}{\mathbb{R}^2\!\rtimes\! \operatorname{SO}_2}
\DeclareMathAlphabet\gothic{U}{euf}{m}{n}
\def\th{\theta}
\renewcommand{\Re}{\operatorname{Re}\nolimits}
\renewcommand{\Im}{\operatorname{Im}\nolimits}
 \newcommand{\argmax}{\operatorname{argmax}\nolimits}
\def\R{{\mathbb R}}
\def\N{{\mathbb N}}
\newcommand{\SE}{\operatorname{SE}_2}
\newcommand{\SOtwo}{\operatorname{SO}_2}
\newcommand{\e}{\epsilon}
\begin{document}

\title{Geometrical Optical Illusion via Sub-Riemannian Geodesics in the Roto-Translation Group}

\author{B.~Franceschiello\inst{1} \and A.~Mashtakov\inst{2} \and G.~Citti \inst{3} \and A.~Sarti \inst{4}}
\authorrunning{Franceschiello, Mashtakov, Citti, Sarti --------\today--------}
\institute{
$^1$ Fondation Asile des Aveugles and Laboratory for Investigative Neurophisiology, Department of Radiology, CHUV - UNIL, Lausanne\\
$^2$ Program Systems Institute of RAS, Russia, CPRC, $\left. \right.~~~~~~\,$
\\$^3$ Department of Mathematics, University of Bologna, Italy $\left. \right.$\\
$^4$ CAMS, Center of Mathematics, CNRS - EHESS,Paris, France
 \\ \email{benedetta.franceschiello@fa2.ch, alexey.mashtakov@gmail.com,\\ giovanna.citti@unibo.it, alessandro.sarti@ehess.fr} 
}

\maketitle

\begin{abstract}
We present a neuro-mathematical model for geometrical optical illusions (GOIs), a class of illusory phenomena that consists in a mismatch of geometrical properties of the visual stimulus and its associated percept. They take place in the visual areas V1/V2 whose functional architecture have been modelled in previous works by Citti and Sarti as a Lie group equipped with a sub-Riemannian (SR) metric. Here we extend their model proposing that the metric responsible for the cortical connectivity is modulated by the modelled neuro-physiological response of simple cells to the visual stimulus, hence providing a more biologically plausible model that takes into account a presence of visual stimulus. Illusory contours in our model are described as geodesics in the new metric. The model is confirmed by numerical simulations, where we compute the geodesics via SR-Fast Marching.
\end{abstract}

\section{Introduction}\label{sec:Intro}
Geometrical-optical illusions (GOIs) have been discovered in the XIX century by German psychologists (Oppel 1854~\cite{oppel1855uber}, Hering, 1878,~\!\cite{Her_1}) and have been defined as situations in which there is an awareness of a mismatch of geometrical properties between an item in the object space and its associated percept~\!\cite{westheimer2008illusions}. These illusions induce a misjudgement of the geometrical properties of the stimulus, due to the perceptual difference between the features of the presented stimulus and its associated perceptual representation. An historical survey of the discovery of geometrical-optical illusions is included in Appendix I of~\!\cite{westheimer2008illusions} and a classification of these phenomena, can be found in Coren and Girgus, 1978,~\!\cite{coren1978seeing}; Robinson, 1998,~\!\cite{robinson2013psychology}; Wade, 1982,~\!\cite{wade1982art}.

The aim of this paper is to propose a mathematical model for GOIs based on the functional architecture of low level visual cortex (V1/V2). This neuro-mathematical model will allow to interpret at a neural level the origin of GOIs and to reproduce the arised percept for this class of phenomena. The main idea is to adapt the model for the functional geometry of V1 provided in~\!\cite{CS1} for perceptual completion. Here we extend it introducing a new metric for the connectivity of the visual cortex, wich takes into account the output of simple cells in V1/V2, as a coefficient modulating the sub-Riemannian metric. We also postulate that geometrical optical illusory curves arise as geodesics in this new connectivity metric between two given sets. Then we will adapt to this definition the SR Fast-Marching (SR-FM) algorithm introduced in~\!\cite{duits2018,sanguinetti2015sub} as tool for the computation of geodesics with fixed two-point boundary conditions (extrema points). As a result we will be able to explain the perceptual phenomena by the geometry of V1
and SR differential geometry instruments. 

The paper is organised as follows. In Section~\ref{sec:1}, we introduce the perceptual problem of geometrical optical illusion and we review the state of the art concerning the previously proposed mathematical models. In Section~\ref{sec:prel}, we briefly recall the functional architecture of the visual cortex and the cortical based model introduced by Citti and Sarti in~\cite{CS1}. In Section~\ref{sec:4}, we introduce the neuro-mathematical model proposed for GOIs, taking into account the modulation of the functional architecture induced by the stimuli. 
In Section~\ref{sec:5}, we discuss sub-Riemannian geodesics and the sub-Riemannian Fast-Marching. 
Finally, we describe the implementation in Section~\ref{sec:modeling}  and discuss the results in Section~\ref{sec:Experiments}.
\section{Geometrical optical illusions}
\label{sec:1}
\subsection{Hering, Ehm--Wackermann and Poggendorff illusions}
\label{sec:11}

\begin{figure}[htbp]
\centering
\includegraphics[width=\textwidth]{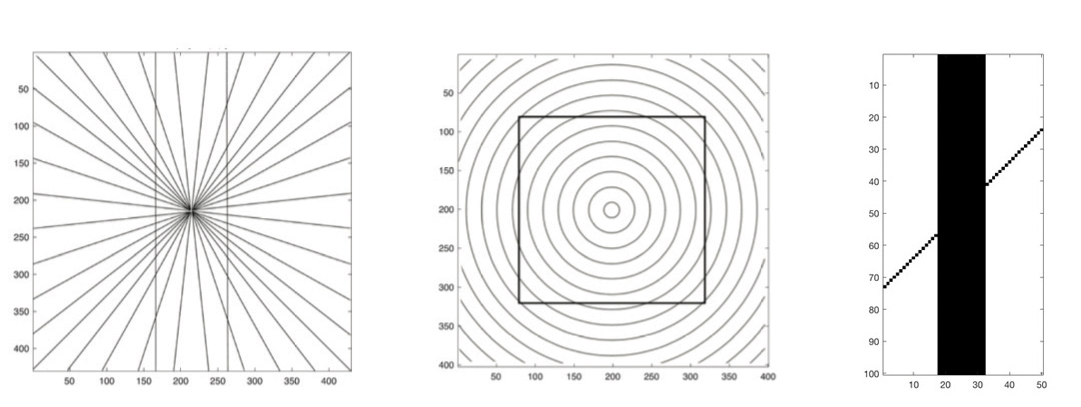} 
 
 \caption{From left to right. Hering illusion: two straight vertical lines in front of a radial background appear as if they were bowed outwards. Ehm--Wackermann illusion: the context of concentric circles bends inwards the edges of the square. Poggendorff illusion: the presence of a central surface induces a misalignment of the crossing transversals.}
 		 \label{fig:1ills}
\end{figure}

The phenomena we consider here consist in misperception effects induced by elements of the image. The Hering illusion, introduced by Hering in 1861~\!\cite{Her_1}, is presented in Figure~\ref{fig:1ills}, left. In this illusion two vertical straight lines are presented in front of a radial background, so that the lines appear as if they were bowed outwards. A similar effect is observable in the Ehm--Wackermann illusion~\!\cite{ehm2012modeling}, i.e. a square on a background composed by concentric circles, Figure~\ref{fig:1ills}, center. One more famous GOI is the Poggendorff illusion, which consists in an apparent misalignment of two collinear, oblique, transversals separated by a rectangular surface (Figure~\ref{fig:1ills}, right). For the latter, psychological elements contributing to this misperception have been presented in~\!\cite{day1976components,robinson2013psychology} and~\!\cite{talasli2015applying}. 

The interest in GOI comes from the chance to provide a better explanation of these phenomena, helping to understand the unrevealed mechanisms of vision~(\cite{eagleman2001visual}). 
Many studies, which rely on neuro-physiological and imaging data, show the evidence that neurons in at least two visual areas, V1 and V2, carry signals related to illusory contours, and those signals in V2 are more robust than in V1~(\cite{von1984illusory},~\!\cite{murray2002spatiotemporal}, reviews \cite{eagleman2001visual},~\!\cite{murray2013illusory}). A more recent study measured the activated connectivity in and between areas of early visual cortices~(\cite{song2013effective}). To integrate the mathematical model with the recent findings, we propose a neural-based model to interpret GOI.

\subsection{Mathematical models proposed in literature}
\label{sec:12}
The first models of GOI are purely phenomenological and provide \textit{quantitative} analysis of the perceived geometrical distortion, such as the angle deformation, which is the attitude of perceiving acute angles larger and obtuse ones as smaller. Models of this type have been proposed in 1971 by Hoffmann in terms of orbits of a Lie group acting on the plane~\!\cite{hoffman1971visual}, and by Smith~\!\cite{smith1978descriptive} in terms of differential equations. More recently Ehm and Wackermann in~\!\cite{ehm2012modeling} and~\!\cite{ehm2016geometric} provided a variational approach expressed by a functional dependent on length of the curve and the deflection from orthogonality along the curve. These approaches do not take into account the underlying neurophysiological mechanisms. 

On the other hand an entire branch for modelling neural activity, the Bayesian framework, had its basis in Helmholtz theory \cite{von2005treatise}: \textit{our percepts are our best guess as to what is in the world, given both sensory data and prior experience.} 
The described idea of unconscious inference is at the basis of the Bayesian statistical decision theory, a principled method for determining optimal performance in a given perceptual task~(\cite{geisler2002illusions}). An application of this theory to motion illusions has been provided by Weiss et al in~\!\cite{weiss2002motion}, by Geisler and Kersten in~\!\cite{geisler2002illusions}, by Ferm\"{u}ller and Malm in~\!\cite{Ferm}. 
In our model, we aim to combine psycho-physical evidence and neurophysiological findings, in order to provide a neuro-mathematical description of GOIs. It is inspired by the celebrated models of Hoffman~\!\cite{hoffman1989visual} and Petitot \cite{petitot1999vers,petitot2008neurogeometrie}, who have founded a discipline now called neuro-geometry, aimed to describe the functional architecture of the visual cortex with geometrical instruments in order to explain phenomenological evidence. More recent contributions are due to August and Zucker~\!\cite{august2000curve}, Sarti and Citti~\!\cite{CS1,sarti2008symplectic}, Duits et al.~\!\cite{duits2010left,duits2010left2}. A recent work trying to integrate the neuro-physiology of V1/V2 for explaining such phenomena has been presented in~\!\cite{franceschiello2017neuro}.

\section{The classical neuromathematical model of V1/V2}\label{sec:prel}

\subsection{The set of simple cells receptive profiles}

The retina, identified as $\M \subset \mathbb{R}^2$, is the first part of the visual path initiating the signal transmission, which passes through the Lateral Geniculate Nucleus and arrives in the visual cortex, where it is processed. 

Let us consider a visual stimulus, i.e. an image
\begin{equation}\label{eq:intensity}
I: \M \subset \mathbb{R}^2 \rightarrow \mathbb{R}^{+}.
\end{equation}
The receptive field (RF) of a cortical neuron is the portion of the retina which the neuron reacts to, and the receptive profile (RP) $\psi(\chi)$ is the function that models its activation when a point $\chi = (\chi_1,\chi_2) \in \R^2$ of the retinal plane is elicited by a stimulus at that point. To be specific, $(\chi_1,\chi_2)$ are the local coordinates of the neighbourhood centered at $\x = (x,y) \in \R^2$, to which the neuron reacts to, while $(x,y)$ refers to the global coordinates system of the retina $\mathbb{R}^2$. Simple cells of V1 are sensitive to position and orientation of the contrast gradient of an image~$\nabla I $. Their properties have been discovered by Hubel and Wiesel in~\!\cite{hubel1962receptive} and experimentally described by De Angelis in~\!\cite{deangelis1995receptive}. 
Considering a basic geometric model, the set of simple cells RPs can be obtained via translation on the vector $\x$ and rotation on the angle $\theta \in \mathit{S}^1 \simeq \SOtwo$ of a unique mother profile $\psi_0(\chi)$.

\subsection{Receptive profiles and Gabor filters}

Receptive fields have been modelled as oriented filters in the middle of 80's and since then the orientation extraction in image analysis has been subject of several works. The first models have been presented by Daugman~\!\cite{Daug}~\!(1985), Jones and Palmer~\!\cite{jones1987evaluation}~\!(1987) in terms of Gabor filters. In the same years Young in~\!\cite{young1987gaussian}~\!(1987) and Koenderink in~\!\cite{koenderink1990receptive}~\!(1990) proposed to model RPs as Gaussian derivatives~\!(DoG). We also refer to \cite{Bart}~\!(2008) and~\!\cite{petitot2008neurogeometrie}~\!(2008) for further explanations and details. Recently a new class of multi-orientation filters have been introduced by Duits et al. in~\!\cite{duits2007image}~\!(2007): cake-wavelets. A comparison between cake-wavelets and Gabor filters efficiency has been presented in~\!\cite{bekkers2014}. Having the scope of modelling the functionality of the visual cortex, we chose Gabor filters, proved to be a good model of receptive profiles and their spiking responses~\cite{PetkovCyber}. We will consider odd and even part of Gabor filters in order to measure $\theta$ correctly for both contours and lines. 

\begin{definition}
A mother Gabor filter is given by
 \begin{equation}
\psi_0 (\chi) =  \frac{\alpha}{2 \pi \sigma^2} e^{\frac{-(\chi_1^2 + \alpha^2 \chi_2^2)}{2\sigma^2}}e^{\frac{2  i \chi_2}{\lambda}}, \qquad \chi = (\chi_1, \chi_2)\in \R^2,
 \label{receptive_profile}
 \end{equation}
 where $\lambda>0$ is the spatial wavelength, $\alpha>0$ is the spatial aspect ratio and $\sigma>0$ is the standard deviation of the Gaussian envelope.
\end{definition}

As discovered by Lee~\!\cite{lee1996image}, the set of simple cells RPs can be obtained via translation on the vector $(x,y) \in \R^2$ and rotation on the angle $\theta \in \mathit{S}^1$ of a unique mother profile $\psi_0(\chi)$. 
Since the set of parameters $(x,y,\theta)$ describes the Lie group $\SE$ of rotations and translations, we identify the set of receptive profiles (RPs) with this group. Let $\eta = (x,y,\theta) \in \SE$. Then an action of the Lie group $\SE$ on the homogeneous space $\R^2$ is given by
\begin{equation} \label{group_law}
\eta \odot \chi = \left(\begin{array}{l}
  x \\ y 
\end{array}\right) + \left(\begin{array}{ll}
 \cos\theta &-\sin\theta \\ \sin\theta &\cos\theta 
\end{array}\right) \left(\begin{array}{l}
 \chi_1 \\ \chi_2
\end{array} \right).                 
\end{equation}
 Denote $\eta^{-1} = (-x \cos\th - y \sin \th, x \sin\th - y \cos\th, -\th) \in \SE$ the inverse element to $\eta$. A general RP can be expressed as
\begin{equation} \label{eq:geRP}
\psi_\eta(\chi)=\psi_0(\eta^{-1}\odot\chi).
\end{equation}

\subsection{Output of receptive profiles and Gabor transform}\label{sec:coticmodel}

The retina is the light-sensitive layer of shell tissue of the eye, where the visual stimulus is first imprinted. In our model, we follow~\cite{CS1} and identify the retina with the planar area $\M \subset \R^2$. 

A visual stimulus is modelled by intensity function~(\ref{eq:intensity}):
$$I: \M \subset \mathbb{R}^2 \rightarrow \mathbb{R}^+: (x,y) \mapsto I(x,y).$$ 
It activates the retinal layer of photoreceptors. Then the cortical neurons whose receptive fields intersect the activated layer spike. We model their spiking responses (or spiking activity)~$O_I(\eta)=O_I(x,y,\theta) $ by a Gabor transform.

\begin{definition}
Let $\psi_\eta \in L^2(\R^2)$ be a Gabor filter, given by \eqref{eq:geRP}, modelling the receptive profile of simple cells of the primary visual cortex. The continuous Gabor transform of a signal $I\in L^2(\R^2)$ is defined as:
\begin{equation}\label{eq:genericGaborTransformDefn}
 O_I(\eta)=\int_{\R^2} 
I(\chi) \psi_\eta(\chi)\, d\chi.
 \end{equation} 
\end{definition}
In the previous definition the output $O_I(\eta)$ of receptive profiles of simple cells in V1 in response to a visual stimulus $I(x,y)$ is mathematically described as a convolution. 
Let us note that Gabor filters are complex valued: the real and imaginary parts have a different role and detect different features. The real part is even and spikes maximally along lines, while the imaginary part is odd and detects the presence of surfaces, i.e. contours. 

\subsection{Hypercolumnar structure}
	\label{sec:221}
	The term \textit{functional architecture} refers to the organisation of cells of the primary visual cortex in structures. 
	The hypercolumnar structure, discovered by the neuro-physiologists Hubel and Wiesel in the 60s~(\cite{hubel1977ferrier}), organizes the cells of V1/V2 in columns (called hypercolumns) covering a small part of the visual field $\mathbb{R}^2$ and corresponding to parameters such as orientation, scale, direction of movement, color, for a fixed retinal position $(x,y)$, Figure \ref{hyper} (top). 
	\begin{figure}[H] 
		\centering
		\includegraphics[width=0.5\columnwidth]{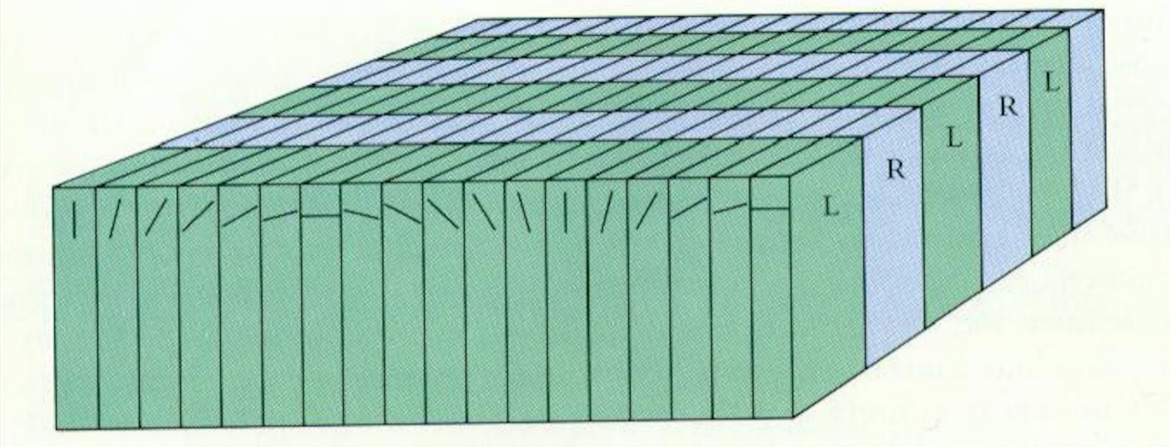}   \includegraphics[width=0.5\columnwidth]{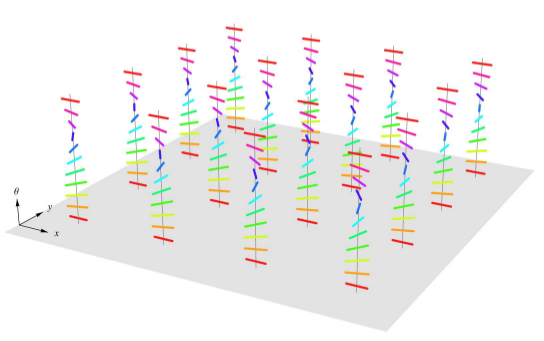}
		\caption{ Top: representation of hypercolumnar structure, for the orientation parameter, where L and R represent the ocular dominance columns (Petitot \cite{petitot2008neurogeometrie}). Bottom: the set of all possible orientations for each position of the retina $(x,y)$.}\label{hyper}
	\end{figure}
	In our framework, over each retinal point we consider a whole hypercolumn of cells, sensitive to all possible orientations, see Figure~\ref{hyper} (bottom). Hence for each position $(x,y)$ of the retina $\M \subset \mathbb{R}^2$ we associate a whole set of filters 
	\begin{equation}
	RP_{(x,y)} = \{ \psi_{(x,y,\theta)}:  \,\, \theta \in \mathit{S}^1 \} . 
	\end{equation}
	This expression 
	defines a fiber $\{\theta \in \mathit{S}^1 \}$  over each point $(x,y) \in \R^2$. 
	
	
	In this framework the hypercolumnar structure is described in terms of differential geometry, but further explanations are requested to model the orientation selectivity process performed by the cortical areas in the space of feature~\!$\mathit{S}^1$~\!(\cite{CS1}).

\subsection{Cortical connectivity and sub-Riemannian structure}\label{subsec:Statement}
\label{sec:223}
From the physiological point of view the orientation selectivity is the action of short range connections between simple cells belonging to the same hypercolumn to select the most probable instance from the spiking activation of receptive profiles in response to a stimulus. 

Mathematically, this process is modelled by assignment to every point $\x = (x,y) \in \R^2$ the angle $\bar{\theta} \in \mathit{S}^1$ --- the orientation of a line passing through the point $\x$. It is found as the element of fiber that gives the maximal response of~(\ref{eq:genericGaborTransformDefn}):
\begin{equation}
\bar{\theta}(x,y) = \argmax\limits_{\theta \in \mathit{S}^1}\,\, | O_I (x,y,\theta) |.
\end{equation}
This process is called \textit{lifting} and it associates to each retinal point $(x,y)$ the corresponding maximal output $\bar{\theta}(x,y)$, denoting the selected orientation (tangent direction) to the visual stimulus at point $\x$. 

The other connectivity, responsible for the formation of contours in the cortex by given a retinal stimulus, is called horizontal connectivity. Horizontal connections are long ranged and connect cells of approximately \textit{the same orientation}, belonging to different hypercolumns. Modelling this behaviour requires to endow V1 with a differential structure, see~\cite{CS1}, where horizontal curves are the lifting of retinal curves to the extended space of positions and orientations --- the Lie group $\SE \cong \RRS$.
The horizontal connectivity is therefore modelled as a diffusion along the integral curves of the left invariant vector fields on the group. 

The basis of left-invariant vector fields on $\SE$ is given by
\begin{equation*}
X_1 = \cos \theta \frac{\partial}{\partial x} + \sin \theta \frac{\partial}{\partial y}, \quad 
X_2 = \partial_\theta,\quad  X_3= -\sin \theta \frac{\partial}{\partial x} + \cos\theta \frac{\partial}{\partial y}.
\end{equation*}
In order to model the propagation of the horizontal connectivity in $\RRS$, in~\cite{CS1} Citti and Sarti proposed to endow $\RRS$ with a sub-Riemannian metric.

\begin{definition} A SR manifold is given by a triple 
$(M, \Delta, \mathcal{G}),$ where $M$ is a connected, simply connected smooth manifold, 
$\Delta$ is a smooth subbundle of the tangent bundle to $M$, and $\mathcal{G}$ is a metric defined on $\Delta$. 
\end{definition} 

In particular, in~\cite{CS1}, the horizontal connectivity in the cortex is modelled by means of distance function, defined on the SR manifold $(M, \Delta, \G)$, where 
\begin{equation}\label{H-invese_metric_pesata}
M=\SE,\quad \Delta = \spann(X_1,X_2), \quad \G =  \omega^1 \otimes \omega^1 + \omega^2 \otimes \omega^2.
\end{equation}
Here $\otimes$ denotes the Kronecker product, and $\omega^i \in T^{*}M$ denotes the basis one form dual to $X_i$, i.e. $\big\langle \omega^i, X_j \big\rangle = \delta_{ij}$, where $\delta_{ij}$ is the Kronecker delta.


\section{The polarized connectivity metric of V1/V2} \label{sec:4}
In the previous section we provided neuro-geometrical tools that, starting from the neural counterpart of the visual cortex, explain the behaviour of V1/V2 in presence of visual stimuli. 
The original contribution of this paper is to extend the previous model~\cite{CS1} in the following way: starting from the sub-Riemannian metric $\G$ in \eqref{H-invese_metric_pesata}
, we weight the long range connectivity taking into account the intra-columnar response of simple cells in V1/V2. 

We will obtain a new polarized metric $\mathcal{G}_0$ in \eqref{X-invese_metric_pesata}. Illusory countours in our model arise as local minimizers (i.e. geodesics) of this polarized metric. 

\subsection{Polarization of the metric}\label{sec:metric_polarization} 
\begin{figure}[ht]
\centering
\includegraphics[width=0.275\linewidth]{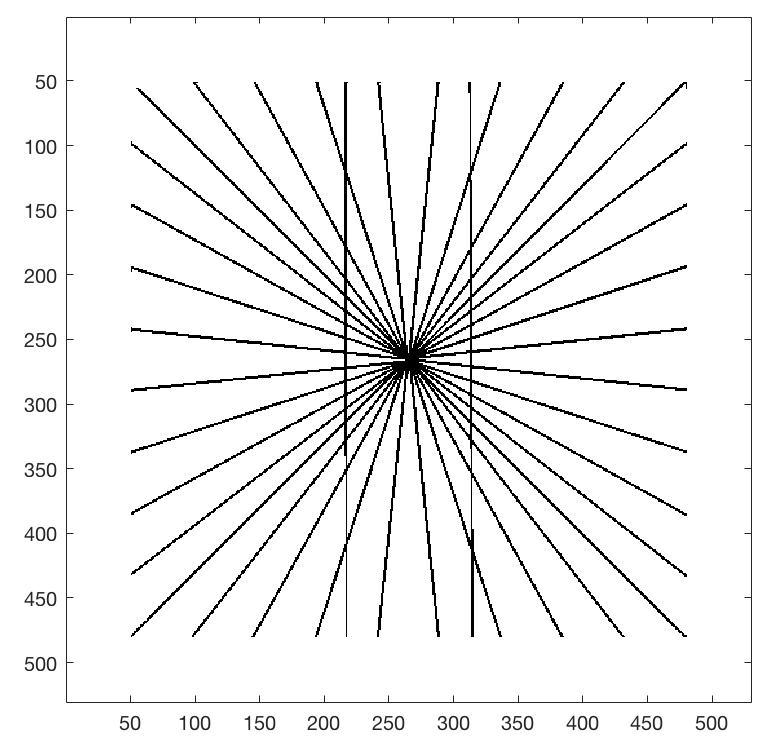} \qquad
\includegraphics[width=0.32\linewidth]{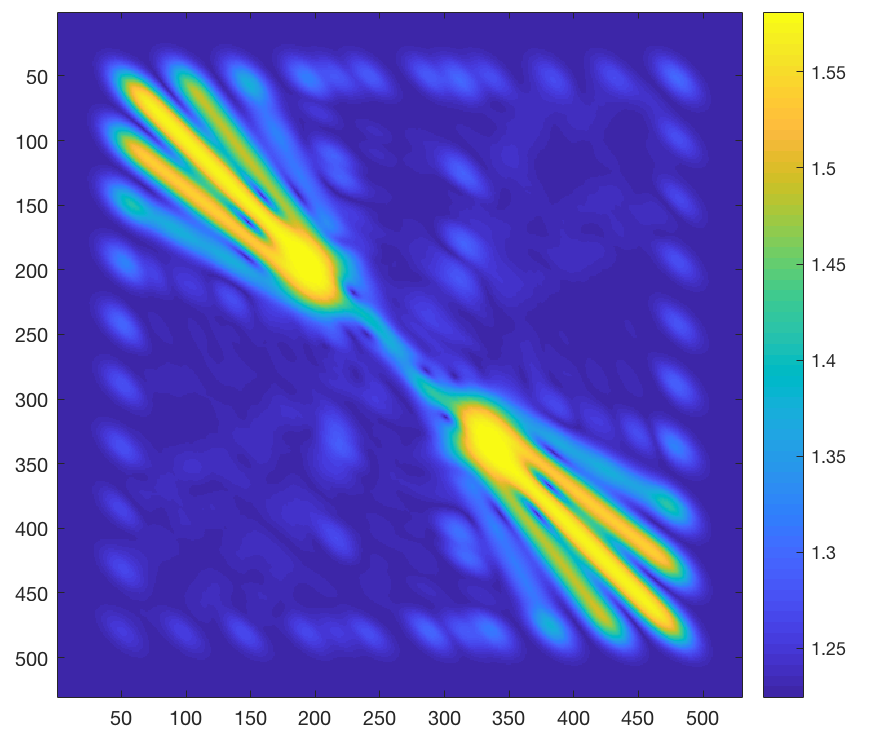} \\
\includegraphics[width=0.275\linewidth]{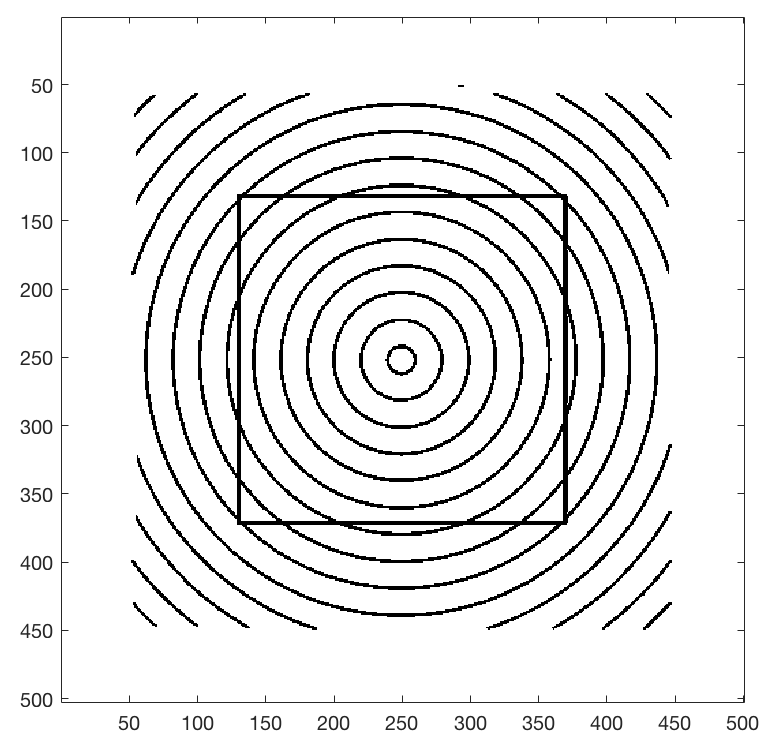}\qquad
\includegraphics[width=0.32\linewidth]{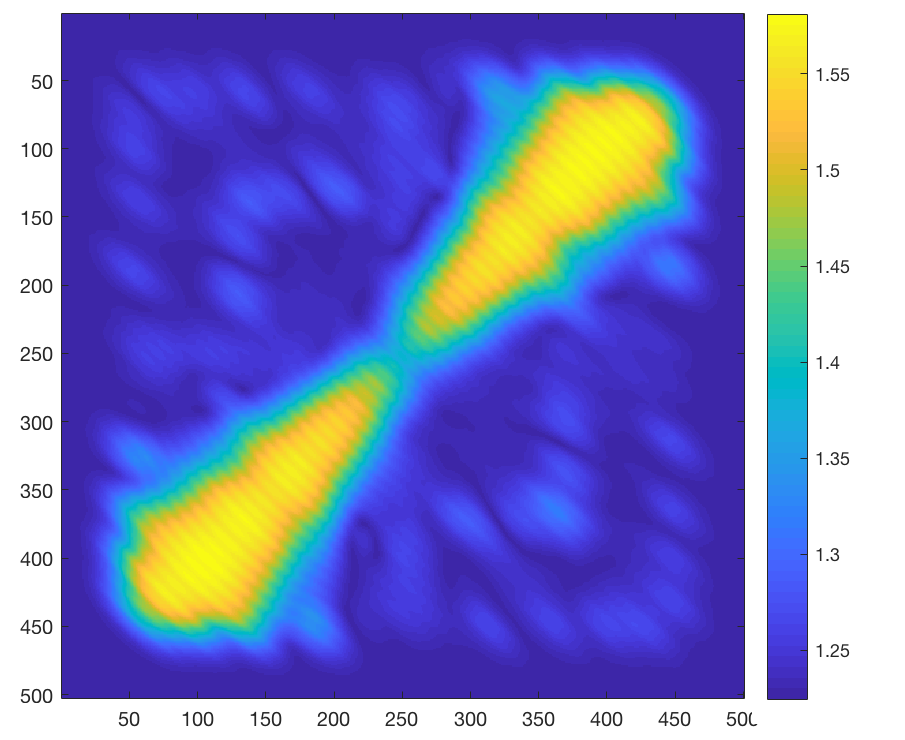}
\caption{{\bf Top Left:} Hering illusion, cf. Fig.~\ref{fig:1ills}.
 {\bf Top Right:} Level line of the external cost $\mathcal{C}(x,y,\theta)$ for $\theta = 2.3167$ rad. {\bf Bottom Left:} Ehm--Wackermann illusion, cf. Fig.~\ref{fig:1ills}.
{\bf Bottom Right:} Level line of the external cost $\mathcal{C}(x,y,\theta)$ for $\theta = 2.3167$ rad.} \label{costo_her}
\end{figure}
\begin{figure}[ht]
\centering
\includegraphics[width=0.24\linewidth]{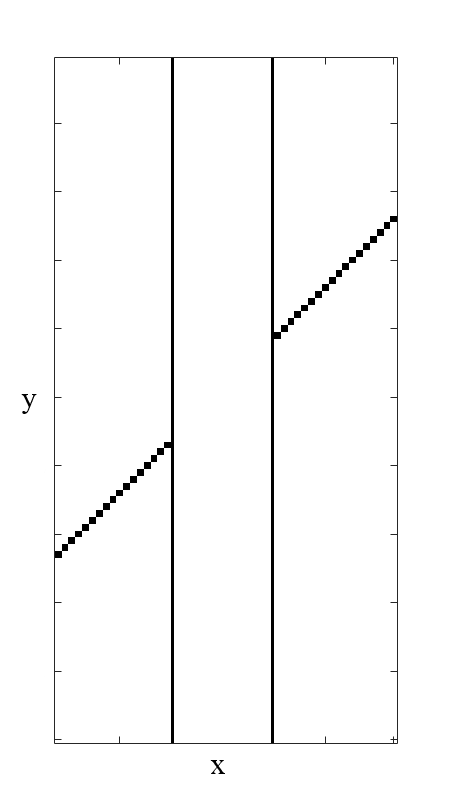} \qquad
\includegraphics[width=0.24\linewidth]{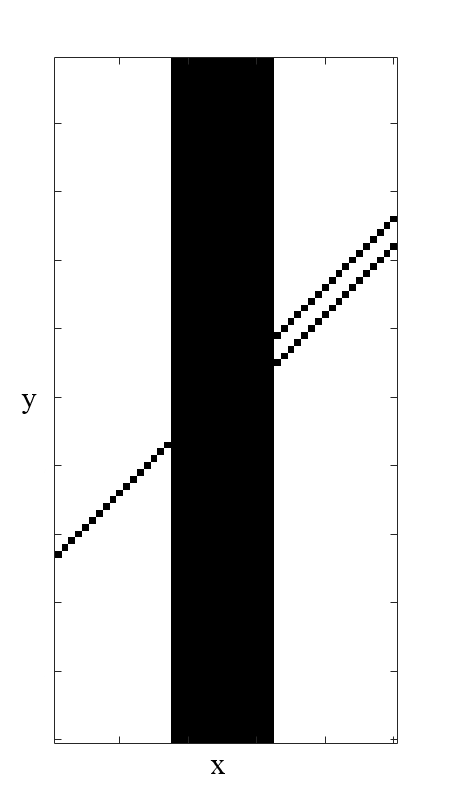} \qquad
\includegraphics[width=0.275\linewidth]{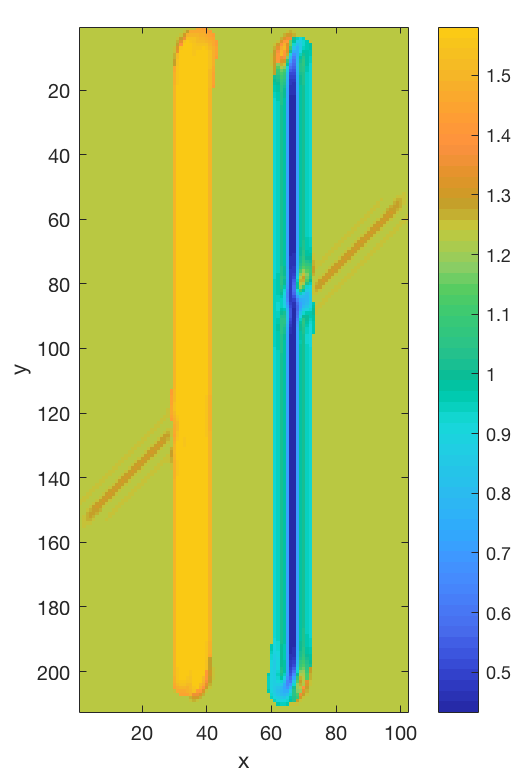} 
\caption{{\bf Left:} Poggendorff illusion, cf. Fig.~\ref{fig:1ills}. 
 {\bf Center:} the initial stimulus with a second transversal corresponding to the perceptual completion. {\bf Right:} 
A level set of 
$\mathcal{C}(x,y,\theta)$ 
for $\theta = 2.83$ rad. The saturation is slightly lowered 
to show detected contours and~lines.
} \label{pogg_main_exp}
\end{figure}

Here we introduce the idea that the isotropic cortical metric $\G_0$ defined on the horizontal subbundle $\Delta$, see \eqref{H-invese_metric_pesata}, can be modulated by the output of simple cells $O_I(\eta)$, induced by the visual stimulus $I$. 
This phenomena is a weak type of learning, or pre-activation, where the activated cells are more sensitive to the cortical propagation. The proposed modulation $P(O_I)$ of the metric induced by the visual stimulus $I$
is maximal in correspondence of the edges, and is expressed~as
\begin{equation}
P(O_I) = P(\eta) =  \Re(O_I(\eta))^2 + \Im(O_I(\eta)), \label{P}
\end{equation}
where $O_I(\eta$) is the output of simple cells as defined in~\eqref{eq:genericGaborTransformDefn}. 

This formulation allows to detect both the presence of lines (first term) and the presence and polarity of contours (second term).
Once computed $O_I(\eta$) from the initial images, we add to $P(\eta)$ a positive values as the values of the output $O_I$ range from negative to positive value. The modification will make $P(\eta)$  a positive value, which can be considered a polarization term for the metric. Finally, we normalize it, obtaining the following external cost:
\begin{equation}
\mathcal{C}(O_I) =\mathcal{C}(\eta) = \frac{c + P(\eta)}{\sqrt{c +{P(\eta)}^2}},
\label{external_cost}
\end{equation}
where $c$ is a suitable positive constant.

We define the polarized metric on the distribution $\Delta = \spann(X_1, X_2)$ as 
\begin{equation} \label{X-invese_metric_pesata} 
	\mathcal{G}_0= \,\, \left(\begin{array}{cc}
			      \frac{1}{ \mathcal{C}_\xi(O_I)}  & 0 \\
			      0 &  \frac{1}{ \mathcal{C}(O_I)}  
			      \end{array}\right),
			      \quad
			      \mathcal{G}_0^{-1} = \left(\begin{array}{cc}
			      \mathcal{C}_\xi(O_I)  & 0 \\
			      0 &  \mathcal{C}(O_I)   
			      \end{array}\right), 
\end{equation}
where $\mathcal{C}_{\xi} (O_I) = \xi^2 \mathcal{C}(O_I)$. Here $\xi>0$ is a real parameter, which will be fixed and kept constant. It allows to weight differently the translational, i.e. $X_1$, $X_3$, and rotational, i.e.  $X_2$ components of the metric. In other words it allows to modulate the anisotropy between the retinical (on $\mathbb{R}^2$) and the hypercolumnar ($\mathit{S}^1$) components. The choice of the constant $\xi$ is discussed across experiments in Section~\ref{sec:modeling}. Let us notice we denoted the metric as $\mathcal{G}_0$, where 
coefficient $0$ indicates that the metric has no contribution in the direction $X_3$. 
 
\subsection{Sub-Riemannian metric}
 A curve $\gamma: [0,T] \mapsto \SE$ is an integral curve of a vector field $X$ starting from the point $a$ iff $\dot{\gamma}(t) = X|_{\gamma(t)} $ and $\gamma(0)= a$. In this last case $\gamma$ will be 
also denoted $$\exp(t X)(a) = \gamma(t).$$

Denote $\MM = (\SE, \Delta, \mathcal{G}_0)$ the SR manifold with polarized metric \eqref{X-invese_metric_pesata}.

\begin{definition}
A Lipschitzian curve $\gamma:[0,T]\to \SE$ on $\MM$ is called horizontal, if $\dot{\gamma}(t) \in \Delta|_{\gamma(t)}$ for a.e. $t \in [0,T]$.
\end{definition} 
In other words, a horizontal curve $\gamma$ on $\MM$ is an integral curve of
$$\dot{\gamma}(t) = u_1(t) X_1|_{\gamma(t)} + u_2(t) X_2|_{\gamma(t)},$$
where $u_1$, $u_2$ are real-valued functions from $L_{\infty}([0,T]).$

\begin{definition} 
The SR length of a horizontal curve $\gamma$ on $\MM$ is defined as
\begin{equation}\label{eq:geodcontsystorig}
\begin{array}{c}
l(\gamma) := \int\limits_0^{T} \sqrt{\mathcal{G}_{0}(\dot{\gamma}(t), \dot{\gamma}(t))}\, {\rm d} t. 
\end{array}
\end{equation}
\end{definition}
For a general introduction to sub-Riemannian geometry see~\cite{montgomery}. Note, that the vector fields $X_1$, $X_2$ satisfy the H\"ormander condition~\cite{montgomery}, i.e. their generated Lie algebra coincides with the tangent space at every point. Due to this property, the Rashevski (1938)--Chow (1939) theorem guarantees  that any two elements $\eta_0$, $\eta_1 \in \SE$ can be connected by a horizontal curve. 



Hence in a connected manifold $\MM$ 
for any couple of points $\eta_0$ and $\eta_1$ the
following set is not empty:
$$\Gamma_{\eta_0, \eta_1} =\{ 
\gamma \text{ horizontal curve}, \gamma(0) = \eta_0, \gamma(T) = \eta_1 \}.
 $$
As a consequence it is possible to define a distance on a connected manifold $\MM$.

\begin{definition} 
The Carnot-Carath\'eodory distance on the sub-Riemannian manifold $\MM$  between two points $\eta_0$ and $\eta_1$ is defined as
\begin{equation}
d_0(\eta_0,\eta_1)= \underset{
\gamma \in \Gamma_{\eta_0, \eta_1}
}{\inf}  l(\gamma). \label{def_geod}
\end{equation}
\end{definition}

Note that  Filippov's theorem~\cite{agrachev2013control} implies existence of length-minimizers and infimum in~(\ref{def_geod}) can be replaced by minimum.

\subsection{Riemannian approximation}
Computation of sub-Riemannian (Carnot--Carath\'eodory) distance in general is a very difficult problem. For example, even in the case of left-invariant SR structures on Lie groups the length-minimizers are known only in several simplest cases: the Heisenberg group~\cite{Brockett1982,Vershik1988}, the groups $\operatorname{SO}_3$, $\operatorname{SU}_2$, and $\operatorname{SL}_2$ with the Killing metric~\cite{Boscain2008}, $\SE$~\cite{sachkovSE2fin}, $\operatorname{SH}_2$~\cite{Butt2017}, the Engel group~\cite{Ardentov2013}, and 2-step corank 2 nilpotent SR problems~\cite{Barilari2011}. Our case $\MM = (\SE, \Delta, \mathcal{G}_0)$ is much more difficult than the case of a left-invariant SR metric, since the metric $\mathcal{G}_0$ depends on the functional parameter -- the external cost $\mathcal{C}$. Thus, to obtain an analytic expression for SR distance~(\ref{def_geod}) does not seem possible. 

Instead, based on idea of Riemannian approximation~\cite{sanguinetti2015sub}, we build a numerical method to compute the SR-distance as a limiting case of the Riemannian distances, when the penalization of movement in the direction $X_3$ (forbidden in SR case) tends to infinity. In other words, the Riemannian approximation relaxes the horizontality constraint $\dot{\gamma}\in\Delta$ and extends the SR metric $\mathcal{G}_0$ to the highly anisotropic Riemannian metric $\mathcal{G}$ defined in whole tangent bundle $T\SE$.  

There are several possible definitions for Riemannian distance
functions which approximate a Carnot-Carath\'eodory distance in the Gromov-Hausdorff sense. We use the following Riemannian approximation for the SR metric~$\mathcal{G}_0$. 
\begin{definition}[Riemannian approximation of the SR metric]
A Riemannian approximation $\mathcal{G}_{\epsilon}$ of the sub-Riemannian metric $\mathcal{G}_0$ is defined over the whole tangent bundle $T\SE$ and has the following expression in the frame $(X_1,X_2,X_3)$:
$$\mathcal{G}_{\epsilon}= \operatorname{diag}\Big(\frac{1}{\mathcal{C}_{\xi}}, \frac{1}{\mathcal{C}},  \frac{1}{\epsilon^2 \mathcal{C}_{\xi}} \Big),$$
where $\xi>0$, $\epsilon>0$ are parameters of the metric anisotropy, $\mathcal{C}= \mathcal{C}(O_I)$ is external cost~(\ref{external_cost}) and $\mathcal{C}_{\xi}= \xi^2 \mathcal{C}$. 
\end{definition}

\begin{remark}
Note that the metric blows up as $\epsilon$ tends to $0$. On the other hand the inverse of the metric, computed as $$\mathcal{G}^{-1}_{\epsilon} =   \operatorname{diag}\Big(\mathcal{C}_{\xi}, \mathcal{C},  \epsilon^2 \mathcal{C}_{\xi} \Big)$$
formally tends to $\mathcal{G}^{-1}_{0}$. 
\end{remark}
For every $\epsilon>0$ we denote as $d_\epsilon(\cdot, \cdot)$ the Riemannian distance associated to the metric $\mathcal{G}_{\epsilon}$. 
The following result, proved in \cite[Theorem 1.1]{ge1993collapsing} in general settings, 
provide a relationship between Riemannian and sub-Riemannian distance:
\begin{theorem} Let $M = \SE$. The sequence $(M, d_\epsilon)$ converges to $(M, d_0)$ as $\epsilon\to 0$ in the Gromov-Hausdorff sense. 
\end{theorem}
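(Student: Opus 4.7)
The plan is to first establish pointwise convergence $d_\epsilon(\eta_0,\eta_1)\to d_0(\eta_0,\eta_1)$ for every pair $\eta_0,\eta_1\in\SE$ via a two-sided inequality, and then to upgrade this to pointed Gromov--Hausdorff convergence using uniform equicontinuity of $\{d_\epsilon\}$ on compact sets.

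The easy inequality $\limsup_{\epsilon\to 0} d_\epsilon \le d_0$ is obtained by admissibility. Any horizontal curve $\gamma$ satisfies $\dot\gamma \in \Delta = \spann(X_1,X_2)$, so the coefficient $1/(\epsilon^2 \mathcal{C}_\xi)$ never contributes to $\mathcal{G}_\epsilon(\dot\gamma,\dot\gamma)$; hence the Riemannian length of a horizontal curve coincides with its sub-Riemannian length $l(\gamma)$ in \eqref{eq:geodcontsystorig}. Taking infimum over $\Gamma_{\eta_0,\eta_1}$ yields $d_\epsilon(\eta_0,\eta_1) \le d_0(\eta_0,\eta_1)$ for every $\epsilon>0$.

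For the reverse inequality $\liminf_{\epsilon\to 0} d_\epsilon \ge d_0$, fix endpoints and let $\gamma_\epsilon$ be a unit-speed Riemannian length-minimizer (which exists by standard Hopf--Rinow/Filippov arguments, since $\mathcal{C}$ is assumed continuous and bounded away from zero on compacts). Write $\dot\gamma_\epsilon = u_1^\epsilon X_1 + u_2^\epsilon X_2 + u_3^\epsilon X_3$. From the uniform bound $l_\epsilon(\gamma_\epsilon) \le d_0$ one obtains
\begin{equation*}
\int_0^{T_\epsilon} \frac{(u_3^\epsilon)^2}{\epsilon^2\,\mathcal{C}_\xi(\gamma_\epsilon(t))}\,{\rm d}t \;\le\; d_0^2,
\end{equation*}
forcing $u_3^\epsilon \to 0$ in $L^2$ at rate $O(\epsilon)$. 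Since the $\gamma_\epsilon$ are equi-Lipschitz with respect to any auxiliary Riemannian background metric on $\SE$, Arzel\`a--Ascoli yields a subsequence converging uniformly to a Lipschitz curve $\gamma_*$. Weak-$*$ compactness in $L^\infty$ produces limit controls $u_i^*$, and the $L^2$-estimate above implies $u_3^* \equiv 0$, so $\gamma_*$ is horizontal and joins $\eta_0$ to $\eta_1$. Lower semicontinuity of the length functional combined with continuity of $\mathcal{C}$ then gives $d_0(\eta_0,\eta_1) \le l(\gamma_*) \le \liminf_{\epsilon\to 0} d_\epsilon(\eta_0,\eta_1)$.

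To conclude, pointed Gromov--Hausdorff convergence of $(M,d_\epsilon)$ to $(M,d_0)$ follows by a standard diagonal argument: the uniform bound $d_\epsilon \le d_0$ together with local bi-Lipschitz comparability of $d_0$ with a Euclidean metric on compact neighborhoods provides equicontinuity of $\{d_\epsilon\}$, so pointwise convergence upgrades to uniform convergence on compact product subsets of $M\times M$, from which $\epsilon$-isometries on balls can be constructed. The main obstacle is the compactness step in the hard direction: one must exploit the $\epsilon^{-2}$ penalty and uniform positivity of $\mathcal{C}$ on compact tubes around a fixed reference horizontal curve to extract a horizontal limit, which is precisely the delicate point treated in the general framework of \cite{ge1993collapsing}.
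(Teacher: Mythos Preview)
The paper does not supply its own proof of this theorem: it simply invokes \cite[Theorem~1.1]{ge1993collapsing} and then, by way of illustration, records the ball--box type comparison of Lemma~\ref{N=d} (from \cite{capogna2016regularity}) between $d_\epsilon$ and the explicit pseudo-distance $\gothic{d}_\epsilon$ built from the regularized gauge $N_\epsilon$. The point of that illustration is that the constant $A$ in \eqref{ball-box} is \emph{uniform in $\epsilon$}, and $N_\epsilon(\zeta)\to N_0(\zeta)$ pointwise; together these make the convergence of $d_\epsilon$ to $d_0$ transparent on the level of local coordinates.

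Your proposal is therefore not a comparison against a proof in the paper but an independent direct argument. As a sketch it is essentially sound: the inequality $d_\epsilon\le d_0$ is immediate since horizontal curves are admissible with unchanged length, and for the reverse inequality the compactness route (uniform arclength bound, $L^2$-smallness of the $X_3$-component from the $\epsilon^{-2}$ penalty, Arzel\`a--Ascoli, weak-$*$ limits of controls, lower semicontinuity of a convex length functional) is the standard mechanism and is indeed what underlies \cite{ge1993collapsing}. Two technical points deserve to be made explicit. First, the domains $[0,T_\epsilon]$ vary; reparametrize all $\gamma_\epsilon$ on $[0,1]$ before extracting limits. Second, in the lower-semicontinuity step you should first drop the nonnegative $X_3$-term to bound $l_\epsilon(\gamma_\epsilon)$ from below by the fixed functional $\int \sqrt{(u_1^\epsilon)^2/\mathcal{C}_\xi+(u_2^\epsilon)^2/\mathcal{C}}$, and only then pass to the weak limit; otherwise the $\epsilon$-dependence of $\mathcal{G}_\epsilon$ obscures the argument.

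In short: your route is a self-contained variational/compactness proof, while the paper's route is to quote the general collapsing result and supplement it with the quantitative, $\epsilon$-uniform ball--box estimate. The latter buys explicit local control of $d_\epsilon$ with constants independent of $\epsilon$, which is exactly what one needs to pass from pointwise to Gromov--Hausdorff convergence without the equicontinuity argument you outline; your approach buys independence from the machinery of \cite{capogna2016regularity} at the cost of the usual Arzel\`a--Ascoli and lower-semicontinuity bookkeeping.
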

 See also \cite[Theorem 1.2.1]{monti-tesi} for another related Riemannian approximation scheme and \cite{duits2018} for an extension to more general Finsler metrics  by Duits et al. In order to better understand this assertion we provide explicit estimates of the approximated distance. To this end, let us recall that the exponential map is a local diffeomorphism around each point (see~\cite{nagel1985balls}).

\begin{proposition}[\cite{nagel1985balls}]
For every fixed point $\eta_0 \in \SE$ the function 
$$\Phi_{\eta_0}(\zeta) = \exp\left(\sum_{i=1}^3 \zeta_i X_i\right)(\eta_0),\qquad \zeta = (\zeta_1,\zeta_2,\zeta_3) \in \R^3,$$
is a local diffeomorphism around the point $\eta_0$ and 
its inverse: $\Phi_{\eta_0}^{-1}$ defines local coordinates in a neighborhood of $\eta_0$. 
\end{proposition}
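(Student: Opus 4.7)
The plan is to invoke the inverse function theorem applied to $\Phi_{\eta_0}$ at $\zeta = 0 \in \R^3$. First I would note that $\Phi_{\eta_0}(0) = \eta_0$, since the integral curve of the zero vector field starting at $\eta_0$ is the constant curve. It then suffices to show that the Jacobian $d\Phi_{\eta_0}(0):\R^3\to T_{\eta_0}\SE$ is a linear isomorphism; the conclusion that $\Phi_{\eta_0}^{-1}$ defines smooth local coordinates near $\eta_0$ then follows immediately from the inverse function theorem.

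Next I would compute $d\Phi_{\eta_0}(0)$ column by column. By construction, restricting $\Phi_{\eta_0}$ to the $i$-th coordinate axis yields
$$\Phi_{\eta_0}(0,\dots,t,\dots,0) \;=\; \exp(t X_i)(\eta_0),$$
whose derivative at $t=0$ equals $X_i|_{\eta_0}$. Hence the columns of the Jacobian matrix are precisely $X_1|_{\eta_0}$, $X_2|_{\eta_0}$, $X_3|_{\eta_0}$.

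Then I would invoke the explicit expressions of the left-invariant frame given earlier. At any point $\eta_0 = (x_0,y_0,\theta_0) \in \SE$, in the coordinate basis $(\partial_x,\partial_y,\partial_\theta)$ these vectors take the form
$$X_1|_{\eta_0}=(\cos\theta_0,\sin\theta_0,0),\quad X_2|_{\eta_0}=(0,0,1),\quad X_3|_{\eta_0}=(-\sin\theta_0,\cos\theta_0,0).$$
A direct expansion along the bottom row shows that the determinant of the associated $3\times 3$ matrix equals $-(\cos^2\theta_0 + \sin^2\theta_0) = -1 \neq 0$, so $\{X_i|_{\eta_0}\}_{i=1}^3$ is a basis of $T_{\eta_0}\SE$. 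Equivalently, this is a manifestation of the bracket-generating property $X_3 = [X_2,X_1]$ together with left-invariance on the Lie group.

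With the Jacobian invertible at $\zeta=0$, the inverse function theorem yields open neighborhoods $U \subset \R^3$ of $0$ and $V \subset \SE$ of $\eta_0$ such that $\Phi_{\eta_0}:U\to V$ is a smooth diffeomorphism, and $\Phi_{\eta_0}^{-1}:V\to U$ provides the claimed smooth chart. There is essentially no obstacle here: in the general Nagel--Stein--Wainger setting the subtlety lies in the uniform ball-box estimates, but the local diffeomorphism statement itself reduces, in our left-invariant setting on $\SE$, to the linear independence of the frame computed above.
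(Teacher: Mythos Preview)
Your argument is correct. The paper itself gives no proof of this proposition: it is stated with a citation to Nagel--Stein--Wainger and used as a black box, so there is nothing to compare against beyond noting that your inverse-function-theorem computation is the standard route and is fully justified in this left-invariant setting on $\SE$. One cosmetic remark: your aside that linear independence of the frame is ``a manifestation of the bracket-generating property $X_3=[X_2,X_1]$'' is not quite the right explanation (bracket generation concerns $\Delta$ filling out $TM$ under iterated brackets, not linear independence of a given frame); the cleaner justification is simply that $\{X_1,X_2,X_3\}$ is a left-invariant frame obtained from a basis of the Lie algebra, hence pointwise independent --- exactly what your determinant computation confirms.
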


To better describe the dependence of the distance 
$d_\epsilon$ on the parameter $\epsilon$, let us define the regularized gauge function:
\begin{equation*}\label{Nepsilon}
N_0(\zeta)=\sqrt{\zeta_1^2+\zeta_2^2+
 |\zeta_3|}, \quad 
N_\epsilon(\zeta)=\sqrt{\zeta_1^2+\zeta_2^2+
 \min\big( |\zeta_3|,  \epsilon^{-2}\zeta_3^2
 \big)}, \, \,\,\,\text{for}\,\epsilon >0
\end{equation*}
and the associated pseudo-distance function $$
	\gothic{d}_{\epsilon} (\eta_0,\eta_1)=N_\epsilon (\Phi_{\eta_0}^{-1}(\eta_1) )$$ which 
	provides an estimate of the distance $d_\epsilon$ for both $\epsilon=0$ and 
$\epsilon>0$ (see \cite{capogna2016regularity}).

\begin{lemma}[\cite{capogna2016regularity}] \label{N=d}
There exists $A>0$, independent of $\epsilon$, such that for all $\eta_0,\eta$
\begin{equation}\label{ball-box}
A^{-1} \;	 \gothic{d}_{\epsilon}(\eta ,\eta_0)   \le d_\epsilon (\eta ,\eta_0) \le A \;
  \gothic{d}_{\epsilon}(\eta ,\eta_0)  .
\end{equation}
\end{lemma}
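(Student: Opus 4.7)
The plan is to transfer the estimate into the exponential coordinates $\Phi_{\eta_0}$ of the preceding proposition and to prove both inequalities uniformly in $\epsilon\ge 0$. A key observation is that the external cost $\mathcal{C}$ is bounded from above and below by positive constants (since $O_I$ is bounded, so is $P$, and $\mathcal{C}$ only involves $P$ and the positive constant $c$), so the coefficients $\mathcal{C},\mathcal{C}_\xi$ can be absorbed into the single constant $A$. Since $\Phi_{\eta_0}$ is only a local diffeomorphism, I would first establish the estimate on a uniform neighbourhood of each base point and then extend it globally by chaining finitely many such local steps along a length-minimiser of $d_\epsilon$. The other structural ingredient is the bracket identity $X_3=[X_2,X_1]$, which encodes the weight-$2$ scaling of the $\zeta_3$-direction and is responsible for the appearance of $\min(|\zeta_3|,\epsilon^{-2}\zeta_3^2)$ in $N_\epsilon$.

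For the upper bound $d_\epsilon\le A\,\gothic{d}_\epsilon$, given $\zeta=\Phi_{\eta_0}^{-1}(\eta)$ I would construct an explicit admissible curve from $\eta_0$ to $\eta$ in two regimes. If $|\zeta_3|\le\epsilon^2$, the straight-line curve $t\mapsto\exp\bigl(t(\zeta_1 X_1+\zeta_2 X_2+\zeta_3 X_3)\bigr)(\eta_0)$, $t\in[0,1]$, has $\mathcal{G}_\epsilon$-length comparable to $\sqrt{\zeta_1^2+\zeta_2^2+\epsilon^{-2}\zeta_3^2}=N_\epsilon(\zeta)$. If $|\zeta_3|>\epsilon^2$, the classical commutator zig-zag---flowing successively $\sqrt{|\zeta_3|}$ along $X_1$, then $X_2$, then $-X_1$, then $-X_2$---generates an $X_3$-displacement of order $|\zeta_3|$ at horizontal cost of order $\sqrt{|\zeta_3|}$; prepending direct flows along $X_1,X_2$ of amounts $\zeta_1,\zeta_2$ produces a curve of total $\mathcal{G}_\epsilon$-length $\lesssim\sqrt{\zeta_1^2+\zeta_2^2+|\zeta_3|}=N_\epsilon(\zeta)$.

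For the lower bound $d_\epsilon\ge A^{-1}\gothic{d}_\epsilon$, let $\gamma:[0,T]\to\SE$ be any Lipschitz curve joining $\eta_0$ to $\eta$ with $\mathcal{G}_\epsilon$-length $L$, write $\dot\gamma=u_1 X_1+u_2 X_2+u_3 X_3$, and normalise so that $u_1^2+u_2^2+\epsilon^{-2}u_3^2\le 1$ up to a bounded multiplicative constant coming from $\mathcal{C}$. A Taylor expansion of $\Phi_{\eta_0}^{-1}\circ\gamma$, combined with the Campbell--Hausdorff formula in $\SE$, yields $|\zeta_1|+|\zeta_2|\le CL$ from Cauchy--Schwarz, and $\zeta_3=\int_0^T u_3\,dt+R$ with $R$ a bracket-type remainder of order $L^2$ and with the direct integral of order $\epsilon L$. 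Splitting according to which of these two terms dominates then gives $\min(|\zeta_3|,\epsilon^{-2}\zeta_3^2)\le CL^2$, and summing the three coordinate bounds delivers $N_\epsilon(\zeta)\le AL$.

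The main obstacle is making this $\zeta_3$-estimate \emph{uniform in $\epsilon$}: the direct $X_3$-contribution, weighted by the blow-up factor $\epsilon^{-2}$ in the metric, has to balance exactly against the bracket remainder of order $L^2$, so that the $\min$ in the definition of $N_\epsilon$ emerges naturally and with constants independent of $\epsilon$. This delicate decomposition---essentially splitting $u_3$ according to the scale $\epsilon$ and applying Cauchy--Schwarz on each piece---is the content of the argument in~\cite{capogna2016regularity}, while the limiting case $\epsilon=0$ reduces to the classical Nagel--Stein--Wainger ball--box theorem applied to the H\"ormander frame $(X_1,X_2)$ on $\SE$.
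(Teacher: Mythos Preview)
The paper does not supply its own proof of this lemma: it is stated with the citation \cite{capogna2016regularity} and then used, with no argument given. So there is nothing in the paper to compare your proposal against line by line.

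That said, your outline is a faithful sketch of the standard strategy behind such ball--box estimates with a Riemannian approximation parameter, and it is essentially the approach of the cited reference. The two structural observations you isolate---that $\mathcal{C}$ is bounded above and below (so the external cost drops out of the constants), and that $X_3=[X_2,X_1]$ gives the weight-$2$ scaling responsible for the $\min(|\zeta_3|,\epsilon^{-2}\zeta_3^2)$ term---are exactly the right ones. Your upper bound via the two-regime construction (direct $X_3$-flow when $|\zeta_3|\le\epsilon^2$, commutator zig-zag otherwise) is correct, as is the lower-bound mechanism: writing $\zeta_3=\int u_3\,dt+R$ with $|\int u_3\,dt|\lesssim\epsilon L$ and $|R|\lesssim L^2$, and then splitting on which term dominates to recover $\min(|\zeta_3|,\epsilon^{-2}\zeta_3^2)\lesssim L^2$ uniformly in $\epsilon$. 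The local-to-global chaining you mention is indeed needed because $\theta\in S^1$ makes $\Phi_{\eta_0}$ only a local chart. There is no genuine gap in what you wrote; it is simply more detail than the paper itself provides, since the paper defers entirely to the citation.
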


The dependence from $\epsilon$ becomes now quite clear. 
Indeed for every fixed $\zeta$: 
$$\epsilon \to 0 \; \Rightarrow \; N_\e(\zeta) \to N_0(\zeta),$$
which is independent of $\epsilon$, and provides an estimate for the SR distance.

\section{GOIs as Sub-Riemannian Geodesics}\label{sec:5}

\subsection{Distance function from a set}

A second original aspect of our model is the way we recover 
subjective contours. Many results using sub-Riemannian (SR) geodesics as model for subjective contours in $\SE$ are already present in literature. 
SR geodesics and their application to image analysis were also studied in~\cite{BenYos,Hladky,MashtakovMTMA}, e.g. to retinal vessel tracking by Bekkers et al.~\cite{bekkers2015pde,Mashtakov2017SO3,PTR2}. Explicit formulas for SR geodesics and optimal synthesis in $\SE$ are obtained by Sachkov~\cite{sachkovSE2fin}.  Let us also underline that in~\cite{CS1} geodesics arise as foliation of subjective surfaces.

The problem we face here is more general, since we do not 
know the exact position of the geodesic extrema. 
Let us consider an example, the Hering illusion: it presents a misperception of two vertical straight lines, perceived 
as bowed outwards. The perceived curves are modelled as geodesics 
of the polarized metric in $\SE$, but we only know the spatial $\mathbb{R}^2$-components of its extrema ($(x_0, y_0)$ and $(x_1, y_1)$), while the angles $\theta_0$ and $\theta_1$ are  unknown. As a result the reconstructed perceptual curve is described as the minimizing geodesic between two a priori known sets, obtained by fixing the spatial component $(x,y) \in \R^2$ and varying the orientation  $\theta \in \mathit{S}^1$. 

We can generalize this problem by means of the following definition.

\begin{definition}\label{min_curve}
Let $K_0, K_1 \subset M = \SE$ be compact and non empty sets.\\
 For fixed $\epsilon \geq\! 0$, consider the (Riemannian, if $\epsilon >0$, or SR, if~$\epsilon =0$) metric $\mathcal{G}_\epsilon$.\\ We~call $\epsilon$-\-minimizing geodesic 
with extrema in the sets $K_0$ and $K_1$ the curve $\gamma_\epsilon$ (horizontal, if $\epsilon=0$), such that 
\begin{equation}\label{eq:epsmingeod}
l_{\epsilon} (\gamma_\epsilon) = \min\limits_{\tilde{\gamma}} \{l_{\epsilon}(\tilde \gamma)\; | \; \tilde \gamma : [0,T]\rightarrow M, \dot{\tilde \gamma}(t) \in \Delta_\epsilon,  \tilde \gamma (0) \in K_0, 
\tilde \gamma (T) \in K_1 \}.
\end{equation}
Here $l_{\epsilon}$ denotes the length in  $\mathcal{G}_\epsilon$, and $\Delta_\epsilon = \spann(X_1,X_2, \epsilon X_3) \subseteq TM$, $\epsilon \geq 0$.%
\end{definition}
Note that minimum in~(\ref{eq:epsmingeod}) exists due to compactness of the sets $K_0$ and $K_1$ and existence of a minimizing geodesic connecting any two given points $\eta_0 \in K_0$ and $\eta_1 \in K_1$, as we will formally show in Proposition~\ref{prop:exist}.

In other words, an $\epsilon$-minimizing geodesic realizes the distance $d_\epsilon$ between two sets $K_0$  and $K_1$, where the distance is defined in the following. 

\begin{definition} Let $K_0, K_1 \subset \SE$ be compact 
 non empty sets.\\ 
The distance function from $K_0$ (Riemannian if $\epsilon >0$ or SR if
$\epsilon =0$) is defined~as
$$d_{\epsilon, \, K_0}(\eta) = \inf
_{\eta_0\in K_0}
d_\epsilon (\eta, \eta_0).$$
Hence, the distance function between two sets $K_0, K_1$ is defined as
$$ d_\epsilon(K_0, K_1)  = \inf_{\eta_1 \in K_1} d_{\epsilon, \, K_0}(\eta_1). $$ \label{distance_def}
\end{definition}
Furthermore, if $\gamma$ is $\epsilon$-minimizing geodesic with extrema in $K_0$ and $K_1$, then
\[\gamma (0) \in K_0, \quad 
 \gamma (T) \in K_1, \quad l_\epsilon(\gamma) = d_\epsilon(K_0, K_1).\]
\begin{remark}\label{rem:exist}
The special case in which $K_0=\{\eta_0\}$ and $K_1=\{\eta_1\}$ 
contain only one point, the previous problem reduces to find the 
length minimizing curve between $\eta_0$ and $\eta_1$. 
The existence of a minimum is well known and it is called minimizing 
geodesic. We refer to~\cite{montgomery} for general properties of 
minimizing geodesics. 
\end{remark}
From the existence of a geodesic with fixed extrema, 
we can deduce the existence of $\epsilon$-minimizing geodesic in the sense of Definition~\ref{min_curve}. 

\begin{proposition}\label{prop:exist}
In the assumptions of Definition~\ref{min_curve} the minimum in~(\ref{eq:epsmingeod}) exists.
\end{proposition}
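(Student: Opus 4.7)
The plan is to reduce the two-set variational problem \eqref{eq:epsmingeod} to the fixed-endpoint case recalled in Remark~\ref{rem:exist}, by first choosing an optimal pair of endpoints in $K_0 \times K_1$ and then invoking the existence of a minimizing geodesic between them. So the argument has two essentially independent ingredients: continuity of the distance $d_\epsilon$ as a function on $\SE \times \SE$, and existence of minimizers for the fixed-endpoint problem.

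First I would establish that $(\eta_0,\eta_1)\mapsto d_\epsilon(\eta_0,\eta_1)$ is continuous. By the triangle inequality it suffices to show $d_\epsilon(\eta,\eta_0)\to 0$ when $\eta\to\eta_0$, and this follows from the ball-box estimate in Lemma~\ref{N=d}: since $\Phi_{\eta_0}^{-1}$ is a local diffeomorphism at $\eta_0$, one has $N_\epsilon(\Phi_{\eta_0}^{-1}(\eta))\to 0$ as $\eta\to\eta_0$, and hence $d_\epsilon(\eta,\eta_0)\le A\,\gothic{d}_\epsilon(\eta,\eta_0)\to 0$. This argument works uniformly for both $\epsilon=0$ (the SR case) and $\epsilon>0$ (the Riemannian approximation), which is exactly why the ball-box estimate is quoted with an $\epsilon$-independent constant.

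Next, since $K_0$ and $K_1$ are compact and non-empty, so is $K_0\times K_1\subset\SE\times\SE$, and the continuous function $(\eta_0,\eta_1)\mapsto d_\epsilon(\eta_0,\eta_1)$ attains its infimum by Weierstrass' theorem at some pair $(\eta_0^*,\eta_1^*)\in K_0\times K_1$, giving
\[
d_\epsilon(\eta_0^*,\eta_1^*)=\inf_{\eta_0\in K_0,\,\eta_1\in K_1}d_\epsilon(\eta_0,\eta_1)=d_\epsilon(K_0,K_1).
\]
By Remark~\ref{rem:exist} (which in turn rests on Filippov's theorem for $\epsilon=0$ and on the Hopf--Rinow theorem for $\epsilon>0$), there exists a length-minimizing (horizontal, if $\epsilon=0$) curve $\gamma_\epsilon:[0,T]\to\SE$ with $\gamma_\epsilon(0)=\eta_0^*$, $\gamma_\epsilon(T)=\eta_1^*$ and $l_\epsilon(\gamma_\epsilon)=d_\epsilon(\eta_0^*,\eta_1^*)$.

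Finally, I would check that $\gamma_\epsilon$ realizes the minimum in \eqref{eq:epsmingeod}. Any admissible competitor $\tilde\gamma$ with $\tilde\gamma(0)\in K_0$ and $\tilde\gamma(T)\in K_1$ satisfies $l_\epsilon(\tilde\gamma)\ge d_\epsilon(\tilde\gamma(0),\tilde\gamma(T))\ge d_\epsilon(K_0,K_1)=l_\epsilon(\gamma_\epsilon)$, so $\gamma_\epsilon$ is indeed a minimizer. The main technical obstacle is the continuity step in the sub-Riemannian regime $\epsilon=0$, where continuity of the Carnot--Carath\'eodory distance is not automatic; this is handled precisely by Lemma~\ref{N=d}, since the H\"ormander condition for $(X_1,X_2)$ underlies the ball-box structure of $N_0$ and ensures that the infimum over horizontal connecting curves is finite and locally comparable to a Euclidean gauge.
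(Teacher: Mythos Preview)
Your argument is correct and follows essentially the same route as the paper: reduce to the fixed-endpoint problem by first locating an optimal pair $(\eta_0^*,\eta_1^*)\in K_0\times K_1$ via compactness, then invoke Remark~\ref{rem:exist}. The paper phrases the first step as a minimizing-sequence/subsequence argument rather than a direct application of Weierstrass, and leaves the continuity of $d_\epsilon$ implicit, whereas you make it explicit through Lemma~\ref{N=d}; otherwise the two proofs coincide.
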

\begin{proof}
Indeed we can find a sequence $\eta_{0, n}$ in $K_0$ and a 
sequence $\eta_{1, n}$ in $K_1$ such that $d_\epsilon(\eta_{0, n}, \eta_{1, n})$ tends to $d_\epsilon(K_0, K_1)$. Since  $\eta_{0, n}$ and  $\eta_{1, n}$ 
are bounded they have a limit, respectively 
$\eta_{0}$ and  $\eta_{1}$. The geodesic curve between these two 
points exists by Remark~\ref{rem:exist} and gives the minimum in~(\ref{eq:epsmingeod}).
\end{proof}

From the convergence of the distance $d_\epsilon$ to the 
distance $d_0$ as $\epsilon \rightarrow 0$, we can deduce the following proposition:

\begin{proposition} 
Let $K_0, K_1 \subset \SE$ be compact non empty sets.\\ The following convergence result holds:
$$d_{\epsilon}(K_0, K_1)\rightarrow d_{0}(K_0, K_1); 
\quad d_{\epsilon, K_0}(\eta)\rightarrow d_{0, K_0}(\eta)
$$ 
as $\epsilon \rightarrow 0 $
\end{proposition}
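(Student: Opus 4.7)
The plan is to establish both one-sided estimates $\limsup\limits_{\epsilon\to 0}d_\epsilon(K_0,K_1)\le d_0(K_0,K_1)$ and $\liminf\limits_{\epsilon\to 0}d_\epsilon(K_0,K_1)\ge d_0(K_0,K_1)$, combining the pointwise convergence $d_\epsilon(\eta_0,\eta_1)\to d_0(\eta_0,\eta_1)$ furnished by the Gromov--Hausdorff theorem quoted before Lemma~\ref{N=d} with the compactness of $K_0,K_1$ and the uniform local control provided by Lemma~\ref{N=d}. The second convergence $d_{\epsilon,K_0}(\eta)\to d_{0,K_0}(\eta)$ is then just the special case $K_1=\{\eta\}$, since singletons are compact.

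For the upper bound, apply Proposition~\ref{prop:exist} at $\epsilon=0$ to obtain $\eta_0^\ast\in K_0$ and $\eta_1^\ast\in K_1$ with $d_0(\eta_0^\ast,\eta_1^\ast)=d_0(K_0,K_1)$. Definition~\ref{distance_def} gives $d_\epsilon(K_0,K_1)\le d_\epsilon(\eta_0^\ast,\eta_1^\ast)$, so
\[
\limsup_{\epsilon\to 0} d_\epsilon(K_0,K_1)\le \lim_{\epsilon\to 0}d_\epsilon(\eta_0^\ast,\eta_1^\ast)=d_0(\eta_0^\ast,\eta_1^\ast)=d_0(K_0,K_1).
\]

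For the lower bound, fix any sequence $\epsilon_n\to 0$ and, by Proposition~\ref{prop:exist}, pick $\eta_0^n\in K_0$ and $\eta_1^n\in K_1$ realising $d_{\epsilon_n}(K_0,K_1)=d_{\epsilon_n}(\eta_0^n,\eta_1^n)$. By compactness of $K_0\times K_1$, a subsequence (not relabelled) satisfies $\eta_0^n\to\bar\eta_0\in K_0$ and $\eta_1^n\to\bar\eta_1\in K_1$. The main obstacle is that both endpoints move with $n$, while the Theorem only gives convergence at fixed endpoints; I bridge this gap using the triangle inequality
\[
d_{\epsilon_n}(\eta_0^n,\eta_1^n)\ge d_{\epsilon_n}(\bar\eta_0,\bar\eta_1)-d_{\epsilon_n}(\bar\eta_0,\eta_0^n)-d_{\epsilon_n}(\bar\eta_1,\eta_1^n),
\]
together with the observation that $N_\epsilon(\zeta)\le N_0(\zeta)$, since $\min(|\zeta_3|,\epsilon^{-2}\zeta_3^2)\le |\zeta_3|$. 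Lemma~\ref{N=d} then yields the uniform-in-$\epsilon$ bound $d_{\epsilon_n}(\bar\eta_i,\eta_i^n)\le A\,N_0\bigl(\Phi_{\bar\eta_i}^{-1}(\eta_i^n)\bigr)\to 0$ for $i=0,1$ as $\eta_i^n\to\bar\eta_i$, exploiting that $\Phi_{\bar\eta_i}$ is a local diffeomorphism and that $N_0$ is continuous with $N_0(0)=0$. Combined with the pointwise convergence $d_{\epsilon_n}(\bar\eta_0,\bar\eta_1)\to d_0(\bar\eta_0,\bar\eta_1)\ge d_0(K_0,K_1)$, this gives $\liminf_n d_{\epsilon_n}(K_0,K_1)\ge d_0(K_0,K_1)$. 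Since the subsequence was arbitrary, the full limit exists and equals $d_0(K_0,K_1)$, and the same argument with $K_1=\{\eta\}$ proves the second convergence.
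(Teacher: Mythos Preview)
Your argument is correct and follows essentially the same route as the paper: for the delicate direction you extract minimizers $\eta_i^n$ by compactness, pass to a convergent subsequence, and bridge the moving endpoints via the triangle inequality before invoking pointwise convergence at the fixed limit pair. The only notable differences are cosmetic. For the upper bound the paper simply observes the monotonicity $d_\epsilon\le d_0$ (any horizontal curve is admissible for $\mathcal{G}_\epsilon$ with the same length), which gives $d_\epsilon(K_0,K_1)\le d_0(K_0,K_1)$ immediately, whereas you pass through the $d_0$-minimizing pair and pointwise convergence. For the moving-endpoint control the paper appeals to the fact that each $d_{\epsilon_j}$ is $1$-Lipschitz with respect to itself, tacitly using $d_{\epsilon_j}\le d_0$ again to conclude $d_{\epsilon_j}(\eta_i^n,\bar\eta_i)\to 0$; your use of Lemma~\ref{N=d} and $N_\epsilon\le N_0$ makes this uniform bound explicit and is arguably cleaner.
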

\begin{proof}
By definition of distances $d_\epsilon$ and $d_0$ we immediately see 
that $\forall \epsilon >0$:
$$0 \leq d_\epsilon(K_0, K_1) \leq d_0(K_0, K_1).$$ 
For any sequence $\epsilon_j \to 0$ as $j\to\infty$ consider the sequence $d_{\epsilon_j}(K_0, K_1)$. Since $d_{\epsilon}(K_0, K_1) $ is bounded, $d_{\epsilon_j}(K_0, K_1)$ is 
converging. 
Clearly the limit $l$ satisfies $l \leq d_0(K_0, K_1)$. 
Let us assume by contradiction that 
$d_{\epsilon_j}(K_0, K_1)$ converges to a limit $l < d_0(K_0, K_1)$. 
For every $j$ there exist $\eta_{0, {\epsilon_j}}\in K_0$ and
$\eta_{1, {\epsilon_j}}\in K_1$ such that 
$$d_{\epsilon_j}( \eta_{0, \epsilon_j}, \eta_{1, \epsilon_j}) = 
d_{\epsilon_j}(K_0,  K_1).$$
Since $(\eta_{0, \epsilon_j})_{j \in \N}$ and 
$(\eta_{1, \epsilon_j})_{j \in\N}$ 
are bounded, then they have a convergent subsequences:
$\eta_{0, j}\to \eta_{0}$ and $\eta_{1, j}\to \eta_{1}$.  
Now we note that 
$$
|d_{\epsilon_j}(\eta_{0, \epsilon_j}, \eta_{1, \epsilon_j}) 
- d_{\epsilon_j}(\eta_{0, \epsilon_j}, \eta_{1}) 
|+| d_{\epsilon_j}(\eta_{0, \epsilon_j}, \eta_{1}) 
- d_{\epsilon_j}(\eta_{0}, \eta_{1}) | \to 0,
$$
since $d_{\epsilon_j}$ is Lipshitz continuous, with 
Lipshitz constant $1$. 
Hence 
$$d_{\epsilon_j}(\eta_{0, \epsilon_j}, \eta_{1, \epsilon_j}) \leq
|d_{\epsilon_j}(\eta_{0, \epsilon_j}, \eta_{1, \epsilon_j}) 
- d_{\epsilon_j}(\eta_{0, \epsilon_j}, \eta_{1}) 
|+$$$$+| d_{\epsilon_j}(\eta_{0, \epsilon_j}, \eta_{1}) 
- d_{\epsilon_j}(\eta_{0}, \eta_{1}) | + d_{\epsilon_j}(\eta_{0}, \eta_{1})\to 
 d_0
 (\eta_{0}, \eta_{1})$$
and as a result 
$$d(K_0, K_1) \leq  d_0
 (\eta_{0}, \eta_{1}) = \lim_{\epsilon_j \to 0} d_{\epsilon_j}(\eta_{0, \epsilon_j}, \eta_{1, \epsilon_j}), $$
and this is a contradiction, so that 
$d_{\epsilon_j}(K_0, K_1)\to d_{0}(K_0, K_1)$. 
Since any sequence has the same limit, then $d_{\epsilon}(K_0, K_1)\to d_{0}(K_1, K_2)$ as $\epsilon \to 0$. 
\end{proof}

\subsection{Riemannian and sub-Riemannian eikonal equation}

Now we show that the distance function from a set satisfies a first order PDE called eikonal equation. 
We first recall the notion of a (sub-)Riemannian gradient.

\begin{definition}
The Riemannian gradient of a function $f$ in
the metric $\mathcal{G}_{\epsilon}$ is the vector 
$$\nabla_{\epsilon} f = \sum_{j=1}^3\mathcal{G}_{\epsilon}^{ij} \left(X_j f\right) X_i.$$
For $\epsilon =0$ we obtain the sub-Riemannian gradient
$$\nabla_{0} f = \sum_{j=1}^2\mathcal{G}_{0}^{ij} \left(X_j f\right) X_i.$$
\end{definition}

In the Riemannian setting it is well know that the distance function from a set is a viscosity solution (in the sense by \cite{crandall1983viscosity,crandall1992user}) of the eikonal equation: 
\begin{proposition}[\cite{crandall1983viscosity,crandall1992user}]
Let $K$ be a compact non empty set with $\mathit{C}^\infty$ boundary and $\epsilon>0$ be a constant. Then in the points of differentiability outside the set $K$ the Riemannian distance function $d_{\epsilon, K}(\eta)$ satisfies the equation
$$\lVert \nabla_{\epsilon} d_{\epsilon, K}(\eta) \rVert_{\mathcal{G}_\e}=1,
$$
and $d_{\epsilon, K}(\eta)$ vanishes in $K$.
\end{proposition}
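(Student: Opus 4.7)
The plan is to verify the two assertions separately. The boundary condition $d_{\epsilon,K}|_K\equiv 0$ is immediate from Definition~\ref{distance_def}: for $\eta\in K$ the admissible choice $\eta_0=\eta$ forces the infimum to vanish. The eikonal identity at differentiability points $\eta\notin K$ is the real content, and I would obtain it by sandwiching $\|\nabla_\epsilon d_{\epsilon,K}(\eta)\|_{\mathcal{G}_\epsilon}$ between $1$ and $1$ from above and below, using a different classical tool in each direction: the triangle inequality for the upper bound, and a minimizing geodesic realization for the lower bound.

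For the upper bound I would first show that $d_{\epsilon,K}$ is $1$-Lipschitz with respect to $d_\epsilon$: applying the triangle inequality $d_\epsilon(\eta,\eta_0)\le d_\epsilon(\eta,\zeta)+d_\epsilon(\zeta,\eta_0)$ for arbitrary $\eta_0\in K$ and then taking the infimum over $\eta_0$ gives $d_{\epsilon,K}(\eta)\le d_\epsilon(\eta,\zeta)+d_{\epsilon,K}(\zeta)$, and symmetry yields the Lipschitz bound. At a point of differentiability $\eta$, pick any unit $\mathcal{G}_\epsilon$-tangent vector $v$ and consider the Riemannian geodesic $\zeta_v(t)$ emanating from $\eta$ with velocity $v$; since $d_\epsilon(\eta,\zeta_v(t))=t+o(t)$, the Lipschitz bound forces $\langle\nabla_\epsilon d_{\epsilon,K}(\eta),v\rangle_{\mathcal{G}_\epsilon}\le 1$, and choosing $v$ aligned with the gradient gives $\|\nabla_\epsilon d_{\epsilon,K}(\eta)\|_{\mathcal{G}_\epsilon}\le 1$.

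For the lower bound I would exploit compactness. Since $K$ is compact there exists $\eta_0\in K$ realizing $d_{\epsilon,K}(\eta)=d_\epsilon(\eta,\eta_0)$, and the Riemannian Hopf--Rinow theorem (applicable because $\mathcal{G}_\epsilon$ is a smooth Riemannian metric on $\SE$ and the external cost $\mathcal{C}$ from~(\ref{external_cost}) is pinched between two positive constants, so $\mathcal{G}_\epsilon$ is complete) supplies a unit-speed minimizing geodesic $\gamma:[0,T]\to\SE$ with $\gamma(0)=\eta_0$, $\gamma(T)=\eta$, and $T=d_{\epsilon,K}(\eta)$. The crucial sub-step is the identity $d_{\epsilon,K}(\gamma(T-s))=T-s$ for all small $s\ge 0$: the inequality $\le$ uses the restriction $\gamma|_{[0,T-s]}$, while $\ge$ follows from the upper Lipschitz bound, since a strictly smaller value at $\gamma(T-s)$ could be concatenated with $\gamma|_{[T-s,T]}$ to give a path from $K$ to $\eta$ shorter than $T$. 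Differentiating this identity at $s=0$ yields $\langle\nabla_\epsilon d_{\epsilon,K}(\eta),\dot\gamma(T)\rangle_{\mathcal{G}_\epsilon}=1$, and Cauchy--Schwarz combined with $\|\dot\gamma(T)\|_{\mathcal{G}_\epsilon}=1$ delivers $\|\nabla_\epsilon d_{\epsilon,K}(\eta)\|_{\mathcal{G}_\epsilon}\ge 1$.

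The only genuine obstacle I foresee is the existence of a unit-speed minimizer from $K$ to $\eta$ in the non-compact ambient space $\SE$; this rests on $(\SE,\mathcal{G}_\epsilon)$ being geodesically complete, a fact that the normalization in~(\ref{external_cost}) guarantees. Everything else reduces to standard Riemannian distance-from-a-set theory and requires no viscosity-solution machinery, since the statement is explicitly restricted to points of differentiability. The $C^\infty$ regularity of $\partial K$ is used only to ensure that the set of differentiability points of $d_{\epsilon,K}$ is open and of full measure in a tubular neighborhood of $K$, so that the identity is meaningful almost everywhere.
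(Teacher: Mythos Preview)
Your argument is correct and is in fact the standard elementary proof of the eikonal identity at differentiability points: the $1$-Lipschitz estimate gives the upper bound, and differentiating along a minimizing geodesic from $K$ gives the lower bound. The sub-step $d_{\epsilon,K}(\gamma(T-s))=T-s$ is justified exactly as you say, and the completeness of $(\SE,\mathcal{G}_\epsilon)$ does follow from the pinching of $\mathcal{C}$ in~(\ref{external_cost}), which makes $\mathcal{G}_\epsilon$ uniformly equivalent to a left-invariant complete metric.

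As for comparison with the paper: there is nothing to compare. The paper does \emph{not} prove this proposition; it is stated as a citation of the classical viscosity-solution literature (Crandall--Lions), with the phrase ``In the Riemannian setting it is well known\ldots''. Your write-up therefore supplies a self-contained proof where the paper simply imports the result. One small remark: the $C^\infty$ hypothesis on $\partial K$ is not actually used anywhere in your argument (nor is it needed for the statement as formulated, which only asserts the identity at points of differentiability); it becomes relevant only if one wants to describe \emph{where} those differentiability points lie, which the proposition does not address.
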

In Carnot groups with SR metric the same assertion has been proved in~\cite{monti-tesi}. 
This result can be extended to the present setting:
\begin{proposition}
The distance function $d_{0,K}$ given by Definition~\ref{distance_def} satisfies the following eikonal equation:
\begin{equation} \label{eq:eikonal}
\!
\left\{
\begin{array}{l}
\lVert \nabla_{0} d_{0,K}(\eta) \rVert_{\mathcal{G}_0}=1,
\textrm{ for }\eta\notin K, \\
d_{0,K}(\eta)=0, \quad \textrm{ for }\eta\in \partial K.
\end{array}
\right.
\end{equation}
\end{proposition}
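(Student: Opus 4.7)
The plan is to follow the classical dynamic programming strategy, deriving the eikonal identity from a local Bellman inequality along horizontal curves, interpreted in the classical sense at points of differentiability (and, if necessary, in the viscosity sense elsewhere). The boundary condition is immediate: for $\eta \in K$ one has $0 \le d_{0,K}(\eta) \le d_0(\eta,\eta)=0$, so $d_{0,K} \equiv 0$ on $K$; continuity of $d_{0,K}$ (it is $1$-Lipschitz in $d_0$ by the triangle inequality) extends this to $\partial K$.

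For the upper bound $\|\nabla_0 d_{0,K}(\eta)\|_{\mathcal{G}_0} \le 1$ at a point of differentiability $\eta \notin K$, I would fix an arbitrary horizontal vector $v = u_1 X_1|_\eta + u_2 X_2|_\eta$ with $\|v\|_{\mathcal{G}_0}=1$ and test the distance along the integral curve $\gamma(t)=\exp(tv)(\eta)$. Since $\gamma$ has unit $\mathcal{G}_0$-speed, the triangle inequality for $d_0$ yields
$$d_{0,K}(\gamma(t)) \le d_{0,K}(\eta) + l_0(\gamma|_{[0,t]}) = d_{0,K}(\eta) + t.$$
Dividing by $t$ and letting $t \to 0^+$ gives $\langle d d_{0,K}(\eta),v\rangle \le 1$, and taking the supremum over horizontal unit vectors $v$ produces the bound via the duality between $\mathcal{G}_0$ and $\mathcal{G}_0^{-1}$ (Cauchy--Schwarz on $\Delta$).

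For the reverse inequality, I would use the existence of a minimizing geodesic guaranteed by Filippov's theorem, which was already invoked after Definition~\ref{def_geod}. Compactness of $K$ together with continuity of $\eta_0 \mapsto d_0(\eta_0,\eta)$ ensures that the infimum defining $d_{0,K}(\eta)$ is attained at some $\eta_0 \in K$. Let $\gamma:[0,L]\to \SE$ be an arclength-parametrized minimizer from $\eta_0$ to $\eta$, $L=d_{0,K}(\eta)$. Sub-optimality of restrictions gives $d_{0,K}(\gamma(L-s))=L-s$ for $s\in(0,L)$, whence
$$\frac{d_{0,K}(\eta) - d_{0,K}(\gamma(L-s))}{s} = 1.$$
At a point of differentiability this yields $\langle d d_{0,K}(\eta), \dot\gamma(L)\rangle = 1$, and since $\dot\gamma(L)$ is a horizontal unit vector, duality delivers $\|\nabla_0 d_{0,K}(\eta)\|_{\mathcal{G}_0} \ge 1$.

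The main obstacle is not either inequality separately — each follows mechanically from the dynamic-programming argument once a minimizer is in hand — but the regularity of $d_{0,K}$. Carnot--Carath\'eodory distances are generically only locally Lipschitz with respect to the intrinsic metric (cf.~Lemma~\ref{N=d}) and may fail to be Euclidean-differentiable on a large set; the abnormal stratum of $\Delta$ further complicates matters. The cleanest remedy is to read~(\ref{eq:eikonal}) in the viscosity sense: one repeats the two inequalities above with $C^1$ touching functions in place of the differential of $d_{0,K}$, or alternatively passes to the limit $\epsilon\to 0$ in the Riemannian eikonal equation satisfied by each $d_{\epsilon,K}$, invoking stability of viscosity solutions together with the Gromov--Hausdorff convergence $d_\epsilon\to d_0$ established earlier. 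Either route bypasses the lack of classical differentiability while preserving the short argument above wherever $d_{0,K}$ is in fact differentiable.
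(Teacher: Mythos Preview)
The paper does not actually prove this proposition: it explicitly omits the proof, saying only that it ``is similar to the one contained in~\cite{monti-tesi}''. Your sketch is therefore more detailed than anything in the paper, and it follows the standard dynamic-programming route one expects to find in~\cite{monti-tesi}: the triangle inequality along horizontal test curves for the sub-solution inequality, and optimality of a length-minimizer (whose existence you correctly source from Filippov and compactness of $K$) for the super-solution inequality, together with a viscosity interpretation to handle non-differentiability.

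One minor point to tighten. In your upper-bound step you write that $\gamma(t)=\exp(tv)(\eta)$ has \emph{unit} $\mathcal{G}_0$-speed. Because the metric here is weighted by the position-dependent external cost $\mathcal{C}$, the curve with constant controls $(u_1,u_2)$ has $\|\dot\gamma(t)\|_{\mathcal{G}_0}=\sqrt{u_1^2/\mathcal{C}_\xi(\gamma(t))+u_2^2/\mathcal{C}(\gamma(t))}$, which equals $1$ only at $t=0$. This is harmless for the derivative you need, since $l_0(\gamma|_{[0,t]})=t+o(t)$ by continuity of $\mathcal{C}$, but the sentence should read ``unit speed at $t=0$'' rather than along the whole curve. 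Apart from this cosmetic fix, the argument is sound.
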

We omit the proof which is similar to the one contained in \cite{monti-tesi}.
Viceversa the following result holds:
\begin{proposition}
The problem
\begin{equation} \label{problem:eikonal}
\!
\left\{
\begin{array}{l}
\lVert \nabla_{0} u \rVert_{\mathcal{G}_0}=1,
\textrm{ for }\eta\notin K, \\
u=0, \quad \textrm{ for }\eta\in \partial K.
\end{array}
\right.
\end{equation}
 has
a unique viscosity solution, $u$ which
coincides with the distance function $d_{0,K}(\eta)$ from the set $K$ in the 
sub-Riemannian setting (for $\epsilon =0$) or Riemannian setting (for $\epsilon >0$).
\end{proposition}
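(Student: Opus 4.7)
The plan is to combine the existence part provided by the preceding proposition --- which asserts that $d_{0,K}$ is a viscosity solution of (\ref{problem:eikonal}) --- with a comparison principle for the degenerate Hamilton--Jacobi operator associated to $\lVert \nabla_0 u \rVert_{\mathcal{G}_0} - 1$. Once comparison is available, uniqueness is immediate: if $u$ and $v$ are two viscosity solutions vanishing on $\partial K$, comparison forces $u \le v$ and $v \le u$ outside $K$, so the common value must be $d_{0,K}$.

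First I would treat the Riemannian case $\epsilon > 0$. Here the Hamiltonian $H_\epsilon(\eta,p)$ associated to $\lVert \nabla_\epsilon u \rVert_{\mathcal{G}_\epsilon} = 1$ is continuous in $\eta$, convex and coercive in $p$, so the classical Crandall--Lions comparison principle for viscosity solutions applies and yields uniqueness. The unique solution is then identified with the Riemannian distance $d_{\epsilon,K}$ through the standard dynamic-programming representation of the value function of the minimum-length control problem with dynamics $\dot\gamma \in \Delta_\epsilon$, closing the Riemannian case.

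Second, to pass to $\epsilon = 0$ I would use two complementary tools. For stability, the Gromov--Hausdorff convergence $d_\epsilon \to d_0$ proved earlier gives locally uniform convergence $d_{\epsilon,K}\to d_{0,K}$, so the Barles--Perthame stability of viscosity solutions under locally uniform convergence of Hamiltonians confirms that $d_{0,K}$ solves (\ref{problem:eikonal}) --- which is the content of the previous proposition. For uniqueness, I would invoke the sub-Riemannian comparison principle of Monti, adapted to our bracket-generating distribution $\Delta = \spann(X_1,X_2)$: the H\"ormander condition ensures that, although the Hamiltonian is degenerate, it remains coercive along horizontal directions, which is the substitute for the full coercivity required by Crandall--Lions.

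The main obstacle is precisely this degeneracy: $H_0$ only sees the horizontal part of the gradient, so the classical doubling-of-variables argument breaks down unless the penalisation is tailored to the sub-Riemannian geometry. I would resolve this by replacing the Euclidean distance in the doubling by the gauge pseudo-distance $\gothic{d}_0$ built from $N_0$ in Lemma~\ref{N=d}; its equivalence with the Carnot--Carath\'eodory distance $d_0$ via (\ref{ball-box}) supplies exactly the horizontal Lipschitz control needed to penalise test pairs and to pass to the limit without losing coercivity. Once comparison is in place, the identification $u = d_{0,K}$ is immediate and both assertions of the proposition follow.
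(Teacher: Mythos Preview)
The paper does not actually prove this proposition: its entire argument is the single sentence ``The proof can be obtained working as in \cite{barles1994solutions}''. So there is no proof to compare against beyond a pointer to Barles' monograph on viscosity solutions.

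Your outline is considerably more detailed than what the paper offers, and its overall strategy --- existence from the preceding proposition plus uniqueness via a comparison principle --- is the standard and correct one, and is exactly what is developed in the reference the paper cites. A couple of remarks on your plan. First, the stability step (passing $d_{\epsilon,K}\to d_{0,K}$ via Barles--Perthame) is not needed for the proposition as stated: the previous proposition already asserts directly that $d_{0,K}$ solves \eqref{problem:eikonal}, so existence is given and only comparison remains. Second, your idea of running the doubling-of-variables argument with the gauge $\gothic{d}_0$ from Lemma~\ref{N=d} in place of the Euclidean penalisation is the right instinct for handling the degeneracy of $H_0$, and is in the spirit of the sub-Riemannian eikonal analysis in \cite{monti-tesi} that the paper already invokes for the forward direction. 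What your sketch buys over the paper's bare citation is an explicit indication of \emph{where} the H\"ormander condition and the ball--box estimate \eqref{ball-box} enter the uniqueness argument; the paper simply takes all of this as packaged in the cited reference.
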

The proof can be obtained working as in \cite{barles1994solutions}

\subsection{Sub-Riemannian Fast Marching }

One of the most efficient method to compute geodesics in the Euclidean setting is Fast-Marching, introduced by Sethian in~\!\cite{Sethian99a}.  In case of geodesics between two points it has been extended by Mirebeau (in the case of Riemannian metric~\!\cite{mirebeau2014anisotropic}), and by Sanguinetti et al in~\!\cite{sanguinetti2015sub} (in the $\SE$ setting with a general SR metric). 

The Fast-marching method works as follows:
\begin{itemize}
\item First the distance map, from the initial point is computed as 
viscosity solution of the eikonal equation. 
\item Then a backtracking procedure is applied. The latter is based on the relationship between the gradient of the distance function and the direction of the geodesics far from cut points, i.e. points where 
the geodesics loses minimality (see~\cite{agrachev2013control}). 
\end{itemize}
In particular, in~\cite{bekkers2015pde} it was shown 
that if $\eta_0 \not= \eta\in \SE$ and the unique minimizing geodesic $\gamma_{\epsilon}:[0, T] \to \SE$  
joining $\eta_0$ and $\eta$ does not contain cut points, then  
$\dot{\gamma}(t) = \nabla_\epsilon d_\epsilon(\gamma(t), \eta_0)$.
As a consequence, they show that the 
geodesic can be recovered with the following backtracking procedure: 

\begin{proposition}\label{prop:backtrack}
By given two distinct points $\eta_0 \neq \eta \in \SE$ consider a geodesic $\gamma_{\epsilon}:[0, T] \to \SE$ joining $\eta_0$ and $\eta$. If $\gamma$  
 does not contain cut points 
then $\gamma_\epsilon (t)=\gamma_{\epsilon b}(T-t)$, 
where $\gamma_{b, \epsilon}$ is a solution of the problem
\begin{equation}\label{BT}
\left\{
\begin{array}{l}
\dot{\gamma}_{\epsilon b}(t) = - \nabla_\epsilon d_\epsilon(\gamma_{\epsilon b}(t), \eta_0), \ \ t \in [0,T], \\
\gamma_{\epsilon b}(0)=\eta.
\end{array}
\right.
\end{equation}
\end{proposition}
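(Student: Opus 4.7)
The plan is to derive the backtracking formula as a time-reversal of the identity $\dot{\gamma}_\epsilon(t) = \nabla_\epsilon d_\epsilon(\gamma_\epsilon(t), \eta_0)$ recalled from~\cite{bekkers2015pde} just above the statement, and then to use uniqueness of the associated ODE.

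First I would fix the arclength parametrization of $\gamma_\epsilon$ so that $d_\epsilon(\gamma_\epsilon(t), \eta_0) = t$ for every $t \in [0,T]$, using that $\gamma_\epsilon$ is a minimizing geodesic without cut points. Along such a geodesic, the distance function $\eta \mapsto d_\epsilon(\eta, \eta_0)$ is smooth in a neighborhood of the trajectory (cut points are precisely the points where smoothness fails), so the (sub-)Riemannian gradient $\nabla_\epsilon d_\epsilon(\cdot, \eta_0)$ is a well-defined smooth vector field there, and the identity $\dot{\gamma}_\epsilon(t) = \nabla_\epsilon d_\epsilon(\gamma_\epsilon(t), \eta_0)$ recalled above applies.

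Next I would introduce the reversed curve $\tilde\gamma(s) := \gamma_\epsilon(T-s)$ for $s \in [0,T]$. Differentiating gives
\begin{equation*}
\dot{\tilde\gamma}(s) \;=\; -\,\dot{\gamma}_\epsilon(T-s) \;=\; -\,\nabla_\epsilon d_\epsilon\bigl(\gamma_\epsilon(T-s),\,\eta_0\bigr) \;=\; -\,\nabla_\epsilon d_\epsilon\bigl(\tilde\gamma(s),\,\eta_0\bigr),
\end{equation*}
and the initial condition $\tilde\gamma(0) = \gamma_\epsilon(T) = \eta$ is satisfied. Hence $\tilde\gamma$ is a solution of the Cauchy problem~\eqref{BT} defining $\gamma_{\epsilon b}$. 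Since $-\nabla_\epsilon d_\epsilon(\cdot, \eta_0)$ is a smooth (in particular locally Lipschitz) vector field along the image of $\tilde\gamma$, the standard Picard--Lindelöf theorem gives uniqueness of solutions to~\eqref{BT}, so $\tilde\gamma = \gamma_{\epsilon b}$, i.e.\ $\gamma_{\epsilon b}(T-t) = \gamma_\epsilon(t)$ as claimed.

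The main obstacle is ensuring the smoothness of $d_\epsilon(\cdot, \eta_0)$ along the whole trajectory, so that both the gradient identity and the uniqueness for~\eqref{BT} are valid globally on $[0,T]$. This is precisely the role of the ``no cut points'' hypothesis: on the open set of points that are joined to $\eta_0$ by a unique minimizing geodesic extending without conjugate points, the distance function is $C^\infty$, and in the sub-Riemannian case ($\epsilon = 0$) one moreover needs that the geodesic contains no abnormal segment, which is automatic here since the uniqueness of the minimizer at every intermediate point forces normality. Granting this regularity --- which in the Riemannian case is classical and in the present $\SE$ setting is established in~\cite{bekkers2015pde} --- the reversal argument above goes through verbatim.
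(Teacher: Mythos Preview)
Your argument is correct. Note, however, that the paper does not give its own proof of this proposition: it is stated as a consequence of the gradient identity $\dot{\gamma}_\epsilon(t) = \nabla_\epsilon d_\epsilon(\gamma_\epsilon(t), \eta_0)$ and attributed to~\cite{bekkers2015pde}. The paper does prove the analogous statement for the distance from a set (the next proposition), and there the strategy differs slightly from yours: rather than invoking the gradient identity as a black box, the authors parametrize by arclength so that $t = d_{\epsilon,K_0}(\gamma_\epsilon(t))$, differentiate, and use the equality case in Cauchy--Schwarz together with the eikonal equation to conclude that $\dot{\gamma}_\epsilon(t)$ and $\nabla_\epsilon d_{\epsilon,K_0}(\gamma_\epsilon(t))$ are parallel with equal norm; the time reversal then follows exactly as you wrote. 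Your route is cleaner given that the gradient identity is already on the table, while the paper's argument for the set version is self-contained and does not rely on that citation. One minor point: the paper's proof does not explicitly invoke Picard--Lindel\"of for uniqueness of~\eqref{BT}, it simply identifies the reversed curve with the solution; your addition of the uniqueness step is a useful clarification.
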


\subsection{SR Fast Marching for the distance function from a set}

Here we extend  backtraking procedure~(\ref{BT}) in Proposition~\ref{prop:backtrack} to the geodesics with extrema in a set, introduced in Definition~\ref{min_curve}. 
Before that we make a remark.

\begin{remark}
If $K_0$ and $K_1$ are compact sets with smooth boundary  and 
$d_{\epsilon, K_0}$ attains a minimum at the point $\eta_1$ 
on the set $K_1$, then the minimizing geodesic with extrema in  $K_0$ and $K_1$ coincides with the minimizing geodesic with first extremum in $K_0$ and second extremum $\eta_1$. 
\end{remark}

As a consequence, the minimizing geodesic can be found via 
backtracking of the distance from the starting set:

\begin{proposition}
Let $\epsilon > 0$. If the following assumptions are satisfied:
\begin{enumerate}
	\item $K_0, K_1 \subset \SE$ are compact non empty sets with smooth boundary;
	\item $d_{\epsilon, K_0}$ attains a minimum at the point $\eta_1$ 
on the set $K_1$;
\item minimizing geodesic $\gamma_{\epsilon}$  
joining $K_0$ and $\eta_1$ does not contain cut points;
\end{enumerate}
then $\gamma_\epsilon (t)=\gamma_{\epsilon b}(T-t)$,
where $\gamma_{\epsilon b}$ is a solution of the problem:
\begin{equation}\label{BT2}
\left\{
\begin{array}{l}
\dot{\gamma}_{\epsilon b}(t) = - \nabla_\epsilon d_{\epsilon, K_0}(\gamma_{\epsilon b}(t)), \ \ t \in [0,T] \\
\gamma_{\epsilon b}(0)=\eta_1.
\end{array}
\right.
\end{equation}
\end{proposition}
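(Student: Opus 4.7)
The plan is to reduce the set-endpoint problem to the single-point one already handled by Proposition~\ref{prop:backtrack}. I identify the actual initial endpoint $\eta_0 \in K_0$ of the minimizing geodesic $\gamma_\epsilon$, show that along $\gamma_\epsilon$ the distance function $d_{\epsilon, K_0}$ coincides with $d_\epsilon(\cdot,\eta_0)$, deduce equality of their Riemannian gradients along the curve, and conclude by local uniqueness for the backtracking ODE.

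\textbf{Step 1 (reduction to a single source point).} By Proposition~\ref{prop:exist} there exists a minimizing geodesic $\gamma_\epsilon : [0,T] \to \SE$ with $\gamma_\epsilon(0) \in K_0$ and $\gamma_\epsilon(T) = \eta_1$. Set $\eta_0 := \gamma_\epsilon(0)$. Hypothesis~(2) then gives $d_\epsilon(\eta_0,\eta_1) = d_{\epsilon, K_0}(\eta_1) = d_\epsilon(K_0,K_1)$, and by hypothesis~(3) the curve $\gamma_\epsilon$, viewed as a minimizer between the two points $\eta_0$ and $\eta_1$, is free of cut points; so Proposition~\ref{prop:backtrack} applies to the pair $(\eta_0,\eta_1)$.

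\textbf{Step 2 (distance coincidence along $\gamma_\epsilon$).} I claim $d_{\epsilon, K_0}(\gamma_\epsilon(t)) = d_\epsilon(\gamma_\epsilon(t),\eta_0)$ for every $t \in [0,T]$. The inequality ``$\le$'' is immediate from Definition~\ref{distance_def}. For the reverse, suppose by contradiction $d_{\epsilon, K_0}(\gamma_\epsilon(t_\ast)) < d_\epsilon(\gamma_\epsilon(t_\ast),\eta_0)$ at some $t_\ast \in (0,T]$; then some $\tilde\eta_0 \in K_0$ is strictly closer to $\gamma_\epsilon(t_\ast)$ than $\eta_0$ is. Since the restriction $\gamma_\epsilon|_{[0,t_\ast]}$ is itself minimizing (no cut points in $[0,T]$), concatenating a minimizing path from $\tilde\eta_0$ to $\gamma_\epsilon(t_\ast)$ with $\gamma_\epsilon|_{[t_\ast,T]}$ yields a curve from $K_0$ to $\eta_1$ of length strictly less than $l_\epsilon(\gamma_\epsilon)$, contradicting the fact that $\gamma_\epsilon$ realizes $d_{\epsilon, K_0}(\eta_1)$.

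\textbf{Step 3 (gradient coincidence and ODE identification).} Define $f(\eta) := d_\epsilon(\eta,\eta_0) - d_{\epsilon, K_0}(\eta) \ge 0$. By Step~2, $f$ vanishes along the image of $\gamma_\epsilon$, so each interior point $\gamma_\epsilon(t)$ is a local minimum of $f$. At these points $d_\epsilon(\cdot,\eta_0)$ is smooth by the no-cut-point hypothesis, and $d_{\epsilon, K_0}$ is differentiable because the projection onto $K_0$ from $\gamma_\epsilon(t)$ is the unique point $\eta_0$ (any other minimizer would produce a strictly shorter concatenation as in Step~2). Hence $\nabla_\epsilon d_\epsilon(\cdot,\eta_0)|_{\gamma_\epsilon(t)} = \nabla_\epsilon d_{\epsilon, K_0}(\gamma_\epsilon(t))$ for $t \in (0,T]$. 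Applying Proposition~\ref{prop:backtrack} to $(\eta_0,\eta_1)$, the solution $\tilde\gamma$ of $\dot{\tilde\gamma}(t) = -\nabla_\epsilon d_\epsilon(\tilde\gamma(t),\eta_0)$ with $\tilde\gamma(0) = \eta_1$ satisfies $\tilde\gamma(t) = \gamma_\epsilon(T-t)$. By the gradient identity just established, $\tilde\gamma$ also solves~(\ref{BT2}); local uniqueness for the backtracking ODE — valid because its right-hand side is smooth along the trace of $\gamma_\epsilon$ — gives $\gamma_{\epsilon b} = \tilde\gamma$, so $\gamma_\epsilon(t) = \gamma_{\epsilon b}(T-t)$, as claimed.

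\textbf{Main obstacle.} The delicate step is the gradient coincidence in Step~3: one must verify simultaneous differentiability of both distance functions along $\gamma_\epsilon$, together with uniqueness of the $K_0$-projection from $\gamma_\epsilon(t)$. Both ultimately rest on hypothesis~(3) combined with Step~2, but turning the variational inequality ``$\nabla f = 0$ at a minimum'' into the needed equality of Riemannian gradients in the anisotropic metric $\mathcal{G}_\epsilon$ — and thereby guaranteeing that the two ODEs produce the same flow — is where the technical care lies.
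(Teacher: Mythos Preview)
Your argument is correct, but it follows a different route from the paper. The paper gives a short, direct proof: parametrize $\gamma_\epsilon$ by arclength, use the identity $t = d_{\epsilon, K_0}(\gamma_\epsilon(t))$, differentiate to get $1 = \langle \nabla_\epsilon d_{\epsilon, K_0}(\gamma_\epsilon(t)), \dot\gamma_\epsilon(t)\rangle_\epsilon$, and combine Cauchy--Schwarz with the eikonal equation $\|\nabla_\epsilon d_{\epsilon, K_0}\| = 1$ and $\|\dot\gamma_\epsilon\| = 1$ to force equality, hence $\dot\gamma_\epsilon(t) = \nabla_\epsilon d_{\epsilon, K_0}(\gamma_\epsilon(t))$; time reversal then gives the statement. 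Your approach is instead a reduction to Proposition~\ref{prop:backtrack}: you identify the foot $\eta_0\in K_0$, prove the distance-coincidence $d_{\epsilon,K_0}(\gamma_\epsilon(t)) = d_\epsilon(\gamma_\epsilon(t),\eta_0)$ along the curve, extract equality of gradients via a first-order optimality argument for $f = d_\epsilon(\cdot,\eta_0) - d_{\epsilon,K_0}\geq 0$, and then invoke the single-point backtracking plus ODE uniqueness. Your Step~2 is in fact precisely the justification of the identity $t = d_{\epsilon,K_0}(\gamma_\epsilon(t))$ that the paper asserts without proof, so on that point your write-up is more careful. The paper's Cauchy--Schwarz trick is shorter and self-contained (it never needs Proposition~\ref{prop:backtrack} or ODE uniqueness), while your modular reduction makes the relation to the point-source case explicit and isolates the distance-coincidence lemma as a standalone fact; either is a legitimate proof.
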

\begin{proof}
We can assume that $\gamma_\epsilon$ is parametrized by arclength. 
Then we have 
$$t = d_{\epsilon, K_0}(\gamma_\epsilon(t))$$
and differentiating with respect to $t$ we get
$$1 = \langle \nabla_ \epsilon d_{\epsilon, K_0}(\gamma_\epsilon(t)), 
\dot{\gamma}_\epsilon (t) \rangle_\epsilon \leq \|
\nabla_ \epsilon d_{\epsilon, K_0}(\gamma_\epsilon(t))\|\|
 \dot{\gamma}_\epsilon(t)\|\leq 1. $$
Since equality holds, then $\dot{\gamma}_\epsilon(t)$ is 
parallel to $\nabla_ \epsilon d_{\epsilon, K_0}(\gamma_\epsilon(t)).$
But they have the same norm, hence
$\dot{\gamma}_\epsilon(t) = \nabla_\epsilon d_{\epsilon, K_0}(\gamma_\epsilon(t)).$
Since $\gamma_{b, \epsilon}(t)= \gamma_\epsilon (T-t),$
then 
$$\dot{\gamma}_\epsilon(t) = - \nabla_\epsilon d_{\epsilon K_0}(\gamma_\epsilon(t)).$$
\end{proof}
In the limit for $\epsilon \to 0$ we recover a minimizing geodesic for the SR problem.
\begin{corollary} 
If $K_0$ and $K_1$ are compact non empty sets with smooth boundary, and 
for every $\epsilon >0$ the $\epsilon$-minimizing geodesic 
$\gamma_{\epsilon}:[0, T] \to \SE$  
joining $K_0$ and $K_1$ does not contain cut points, 
then there exists 
$$\lim_{\epsilon\to 0} \gamma_{\epsilon} = \gamma_0$$
and $\gamma_0$ is the minimizing SR geodesic 
joining $K_0$ and $K_1$. 
\end{corollary}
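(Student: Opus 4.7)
The plan is to obtain $\gamma_0$ by a compactness argument in the spirit of $\Gamma$-convergence: extract a uniform limit of $\{\gamma_\epsilon\}$ via Arzelà--Ascoli, show it is horizontal and attains the SR distance $d_0(K_0,K_1)$, and then upgrade subsequential convergence to full convergence using the no-cut-point hypothesis.

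First, I would reparametrize each $\gamma_\epsilon$ by constant $\mathcal{G}_\epsilon$-speed on a common interval $[0,1]$. Since $\gamma_\epsilon$ is $\epsilon$-minimizing and $d_\epsilon(K_0,K_1) \le d_0(K_0,K_1) < \infty$, the speeds are uniformly bounded. Writing $\dot\gamma_\epsilon = u_1^\epsilon X_1 + u_2^\epsilon X_2 + u_3^\epsilon X_3$, the diagonal form of $\mathcal{G}_\epsilon$ together with the positivity and boundedness of the external cost $\mathcal{C}$ on the compact region enclosing the curves yield uniform $L^\infty$ bounds on $u_1^\epsilon, u_2^\epsilon$ and an $O(\epsilon)$ bound on $u_3^\epsilon$. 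Combined with $\gamma_\epsilon(0)\in K_0$, this gives a uniform Euclidean Lipschitz bound, so Arzelà--Ascoli produces a subsequence $\gamma_{\epsilon_j}\to\gamma_0$ uniformly on $[0,1]$, with $\gamma_0(0)\in K_0$ and $\gamma_0(1)\in K_1$ by compactness of the sets.

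Next, along a further subsequence the velocity components $u_i^{\epsilon_j}$ converge weakly-$\star$ in $L^\infty$ to limits $u_i^0$. The strong bound $\|u_3^{\epsilon_j}\|_\infty = O(\epsilon_j)$ forces $u_3^0\equiv 0$, so $\gamma_0$ is horizontal. Since $\mathcal{G}_0$ restricted to $\Delta$ is a positive-definite quadratic form, the SR length $l_0$ is convex in the horizontal controls and hence lower semicontinuous under weak-$\star$ convergence, so combining with the previously established convergence $d_{\epsilon_j}(K_0,K_1)\to d_0(K_0,K_1)$ yields
\[
l_0(\gamma_0) \le \liminf_{j\to\infty} l_{\epsilon_j}(\gamma_{\epsilon_j}) = d_0(K_0,K_1).
\]
The reverse inequality is immediate because $\gamma_0$ is an admissible horizontal curve with $\gamma_0(0)\in K_0$ and $\gamma_0(1)\in K_1$, so $\gamma_0$ is a minimizing SR geodesic with extrema in $K_0$ and $K_1$.

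Finally, to promote subsequential convergence to $\lim_{\epsilon\to 0}\gamma_\epsilon=\gamma_0$, I would invoke the absence of cut points on $\gamma_\epsilon$ for every $\epsilon>0$: in such a regime the exponential map is a local diffeomorphism, and together with the smoothness of $\partial K_0$ and $\partial K_1$ and the transversality of an optimal geodesic to these boundaries, this yields uniqueness of the minimizing SR geodesic joining the two sets. Since every subsequence of $\{\gamma_\epsilon\}$ admits a further subsequence converging to this unique $\gamma_0$, the whole family converges. The main obstacle I anticipate is precisely this last step: the compactness and length-passage-to-the-limit arguments go through smoothly, whereas pinning down a single limit requires carefully exploiting the geometric non-degeneracy encoded by the no-cut-point hypothesis together with smoothness of the target sets to rule out bifurcation of minimizers in the SR limit.
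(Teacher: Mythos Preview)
The paper offers no proof of this corollary: it is stated immediately after the backtracking proposition with only the sentence ``In the limit for $\epsilon \to 0$ we recover a minimizing geodesic for the SR problem'' as justification. The implicit idea appears to be that $\gamma_\epsilon$ is determined by the backtracking ODE $\dot\gamma_{\epsilon b}=-\nabla_\epsilon d_{\epsilon,K_0}$, and since $d_{\epsilon,K_0}\to d_{0,K_0}$ (established earlier), the flows should converge; but no details are given.

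Your route is genuinely different and considerably more explicit: a direct variational compactness argument (uniform Lipschitz bounds from the diagonal form of $\mathcal{G}_\epsilon$, Arzel\`a--Ascoli, weak-$\star$ compactness of controls, horizontality of the limit from $u_3^\epsilon=O(\epsilon)$, and lower semicontinuity of length combined with $d_\epsilon(K_0,K_1)\to d_0(K_0,K_1)$). This cleanly yields subsequential convergence of $\gamma_\epsilon$ to an SR minimizer, which is already more than the paper actually argues. One technical remark: the lower-semicontinuity step is cleanest via energy rather than length, since with constant-speed parametrization on $[0,1]$ one has $l_\epsilon(\gamma_\epsilon)^2$ equal to the $\mathcal{G}_\epsilon$-energy, and dropping the $u_3$ term bounds this below by a convex functional of $(u_1,u_2)$.

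Your diagnosis of the weak point is accurate. The hypothesis is that each $\gamma_\epsilon$ (for $\epsilon>0$) contains no cut points; this guarantees uniqueness of $\gamma_\epsilon$ for each fixed $\epsilon$, but does not by itself force uniqueness of the SR minimizer $\gamma_0$, which is what you need to pass from subsequential to full convergence. Your appeal to the exponential map being a local diffeomorphism and to transversality at $\partial K_0,\partial K_1$ is suggestive but not a proof: one would still need to rule out the possibility that the SR problem between $K_0$ and $K_1$ admits several minimizers (e.g.\ by showing the no-cut condition persists in the limit, or by using the eikonal characterization of $d_{0,K_0}$ to show the backtracking ODE at $\epsilon=0$ has a unique solution from the limiting endpoint). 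The paper does not address this point either, so your proposal is at least as complete as the original, and you have correctly isolated where the real work lies.
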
 

Clearly the minimizing geodesic between two sets can be 
as well computed via minimization on the complete set 
of geodesics connecting points of $K_0$ and~$K_1$.
 
\section{Implementation}\label{sec:modeling}

\subsection{Choice of the Gabor filters}

The imaginary part of the output corresponds to the response of odd Gabor filters (contours polarity), while the real part performs the orientation detection of lines (line detection). We assume that the orientation domain $\theta$ takes values in $[-\pi, \pi)$. As $Re(O_I(x,y,\theta))$ has $\pi$-periodicity, the energy volume is duplicated to insure a well definition of the activation responses over the whole angular domain $[-\pi, \pi)$. Let us notice that while combining odd and even receptive profile, since odd receptive profiles over a straight line do not produce any response, we took different scales for even and odd filters: even Gabor profiles need to be sharp to detect line orientation, while odd Gabor profiles need to be wider to detect whether they are aligned or not along a surface contour.

\subsection{Discretization parameters}\label{subsec:exp0}
The first step performed consists into the convolution of the initial image with a bank of even and odd Gabor filters. A response $O_I$ is produced and opportunely combined to obtain $P$, as described in \eqref{P}. $P$, corresponding to the polarization of our SR metric, is  shifted to positive values and normalized to obtain $\mathcal{C}(x,y,\theta)$, finally used as weight for the connectivity (figure \ref{costo_her}, right column).
The SR geodesic that solves~(\ref{def_geod}) is obtained in two steps: 
\begin{enumerate}
\item Computation of the distance map solving~(\ref{eq:eikonal}) via Sub-Riemannian Fast-Marching, see figure \ref{Geod_45}, left;

\item Computation of the geodesic by gradient descent~(\ref{BT2}). 

\end{enumerate}
The constructed metric in $\RRS$ is a Riemannian approximation of the SR metric, weighted by the external cost $\mathcal{C}(x,y,\theta)$.
When switching from image coordinates to mathematical coordinates one should take care of correctly evaluating $\xi$, which represents the anisotropy between the two horizontal direction, $\xi \Delta x = \Delta \theta$, where $\Delta x, \Delta \theta$ are the discretization steps along $x$ and $\theta$. 
In the experiments for Hering and Ehm--Wackermann illusions, we set $\epsilon = 0.1$, $\xi = 7$, while it varied proportionally to the geometrical elements of the image (entry transversal and width of the surface) in the experiments for Poggedorff illusion. 
As was shown in~\cite{sanguinetti2015sub}, $\epsilon$ is taken sufficiently small to give an accurate approximation of the SR-case.

\section{Results}\label{sec:Experiments}
We processed the initial stimuli of the illusions through the method presented in Section~\ref{sec:metric_polarization} and implemented in Section~\ref{subsec:exp0}. 
\begin{figure}[htpb]
\centering
\includegraphics[width=0.24\linewidth]{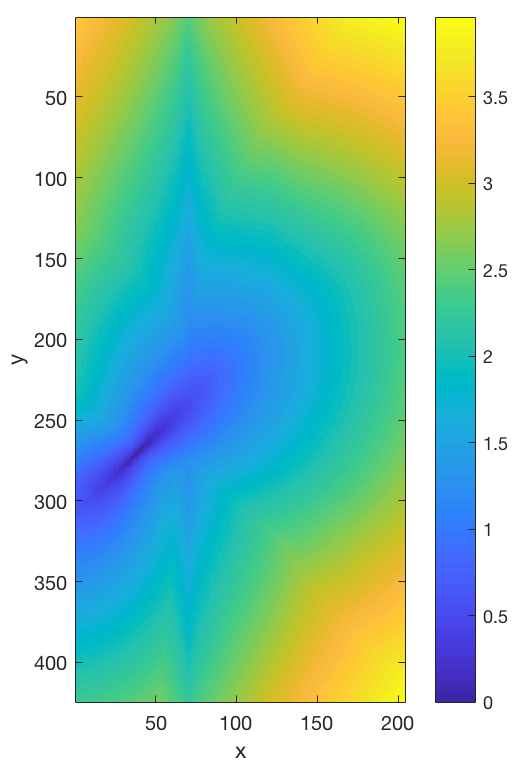}  \includegraphics[width=0.245\linewidth]{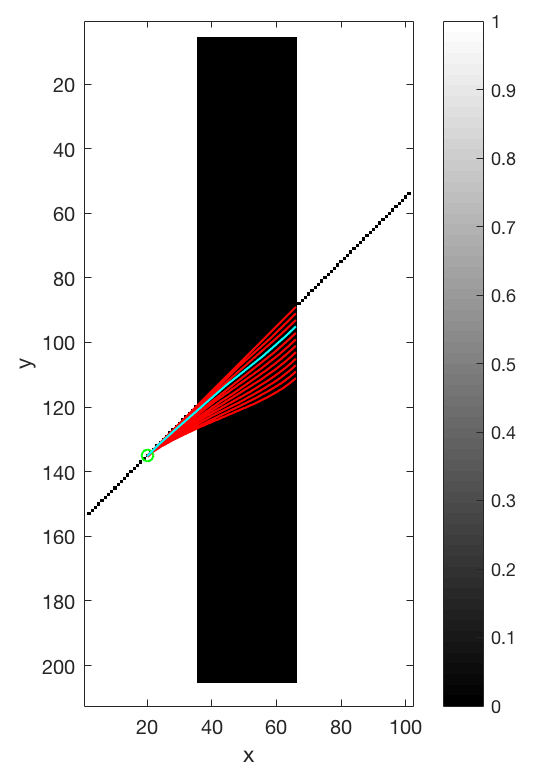} 
\includegraphics[width=0.495\linewidth]{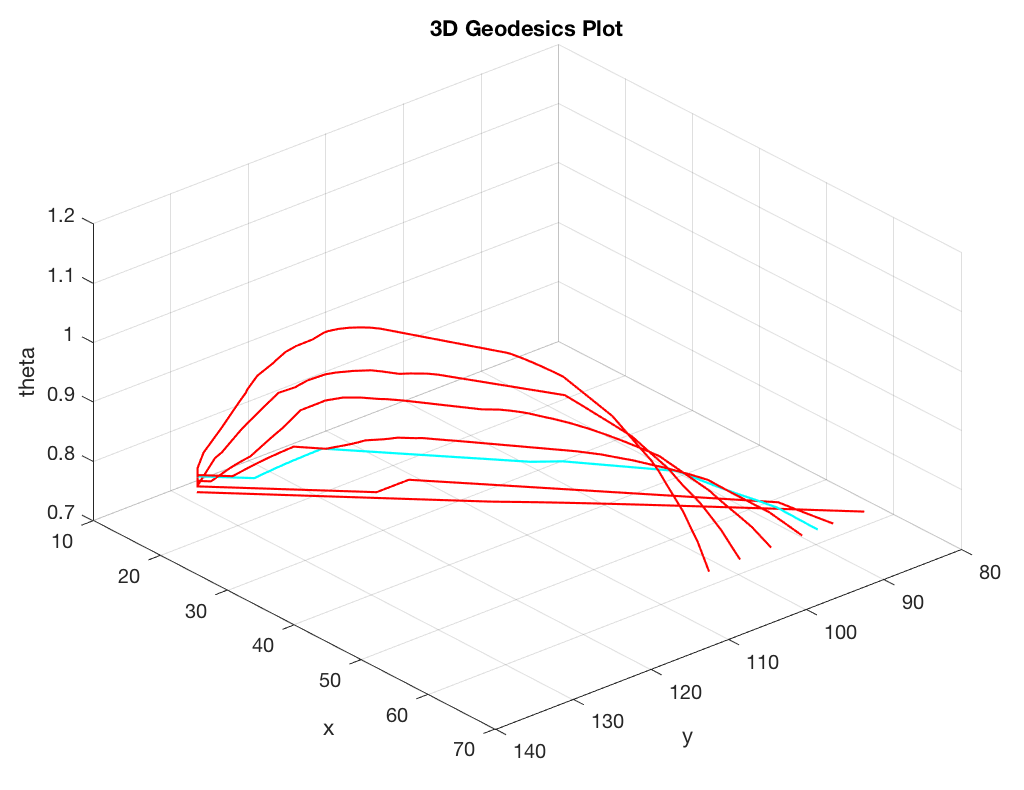} 
\caption{{\bf From left to right}: (1) minimum of distance map $d_{\epsilon, K}(\eta)$ from the boundary value condition (initial seed) of equation \eqref{eq:eikonal}, along the direction $\theta$, computed through SR-Fast-Marching. (2): 2D projection of the computed geodesics. The perceptual curve is cyan. (3): 3D plot of the computed geodesics. $\xi = 4.5$} \label{Geod_45} 
\end{figure}

\subsection{Hering illusion}\label{her:valid}
The Hering illusion, introduced by Hering in 1861~\cite{Her_1} is presented in Figure~\ref{fig:1ills}, left. In this illusion two vertical straight lines are presented in front of radial background, so that the lines appear as if they were bowed outwards. As described in the previous sections, we first convolve the distal stimulus with the entire bank of Gabor filters to compute the polarization of the metric $P(x,y,\theta)$: we take 32 orientations selected in $[0,\pi)$, $\sigma = 7.20$ pixels, $\alpha = 0.5$ pixels.  The resulting computed perceptual curves are shown in figure \ref{Geod_Her}. In order to determine the perceptual angle, we varied $\theta$ between $(0, \pi/4)$ as starting set, using $-\theta$ as ending set.

\begin{figure}[htpb]
	\centering
	\includegraphics[width=0.4\linewidth]{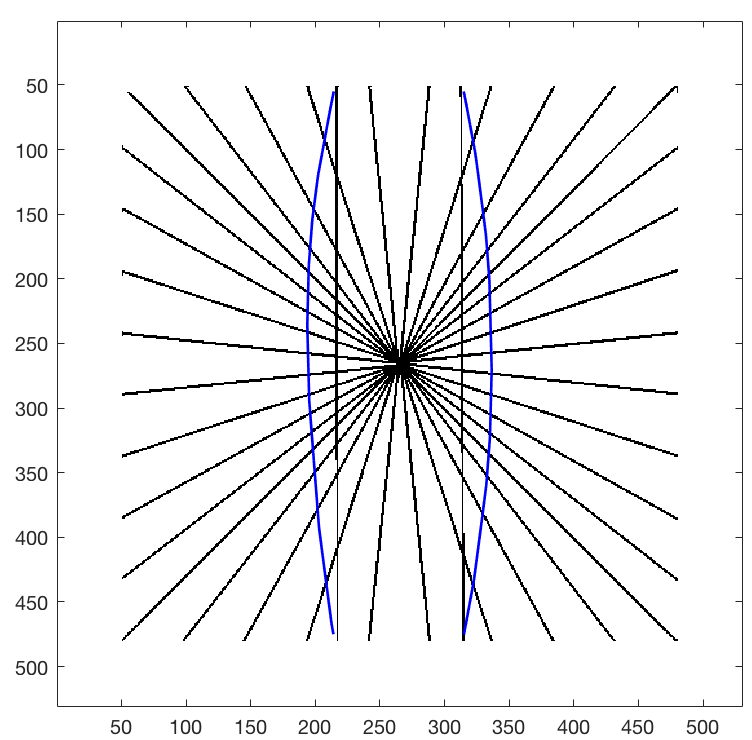} 

\caption{ Representation of the computed perceptual curves.} \label{Geod_Her} 
\end{figure}
\subsection{Ehm-Wackermann square illusion}\label{ehm:valid}
This illusion, introduced by Ehm and Wackermann in \cite{ehm2012modeling}, consists in presenting a square over a background of concentric circles, Figure~\ref{fig:1ills}, center. This context, the same we find in Ehrenstein illusion, bends the edges of the square toward the center of the image. Here we take the same number of orientations, 32, selected in $[0,\pi)$, $\sigma = 10$ pixels, $\alpha = 0.5$ pixels. In order to determine the perceptual angle, we varied $\theta$ between $(0, \pi/4)$ as starting set, using $-\theta$ as ending set. The resulting computed perceptual curves are shown in Figure~\ref{Geod_Ehm}. 

\begin{figure}[htpb]
\centering
\includegraphics[width=0.39\linewidth]{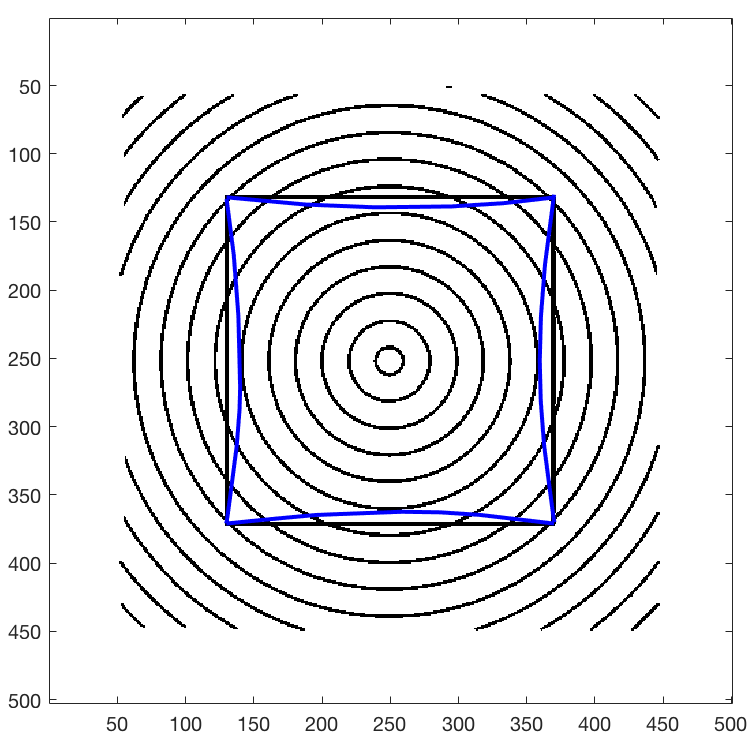} 

\caption{Representation of the computed perceptual curves.} \label{Geod_Ehm} 
\end{figure}

\subsection{Poggendorff illusion}\label{subseq:valid}
Manipulating the elements of Poggendorff to understand how to magnify the illusory phenomenon has been done in many works~\cite{day1976components,weintraub1971poggendorff}. In \cite{weintraub1971poggendorff}, the authors performed psychophysical experiments to obtain quantitative measures of the magnitude of the illusion: the illusory effects increased with increasing separation between the parallels as well as increasing the width of the obtuse angle formed by the transversal. We were not able to estimate computationally the effect induced by obtuse angles because of the interaction of the parameter $\xi$ in the sub-Riemannian Fast-Marching.

Here we consider odd Gabor filters with the following values: $\alpha = 1$, $\theta \in (-\pi, \pi)$ ($32$ values for Even Gabor profiles, 62 for Odd Gabor filters), $\sigma = 3.5$ pixels (Even) and $\sigma = 7.5$ pixels (Odd). The dimensions of images are $210 \times 102$ pixels. The scale parameter $\sigma$ is chosen in relationship with image resolution and is taken smaller for even Gabor filters, to construct filter sharp enough to detect lines. On the other hand, $\sigma$ for odd Gabor filters is taken bigger, to detect the presence of surface and obtaining vanishing integral along lines. For the entry transversal, we chose $\theta = \pi/2, \pi/4, \pi/6, \pi/10$ and width = $7,15,25$ pixels. We computed the distance between the entry trasversal and the set containing the ending points on the right side of the surface. The shortest curves computed through this model are in accord with the perceptual expectation. The angle variation of the transversal, create an increased obtuse angle effect ($\theta = \pi/10$) and a non illusory effect ($\theta = \pi/2$). In Figure~\ref{geods_poggendorff}, all the 2D projections of the computed geodesics for varying transversal entry angle and surface width is presented. When $\theta = \pi/2$, no illusion is shown and the geodesic is a straight~line.

\begin{figure}[t]
\centering
\includegraphics[width=0.17\linewidth]{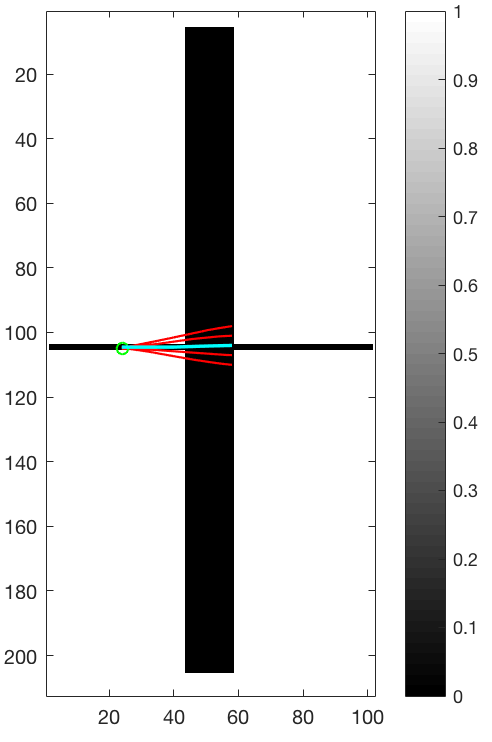} \qquad 
\includegraphics[width=0.17\linewidth]{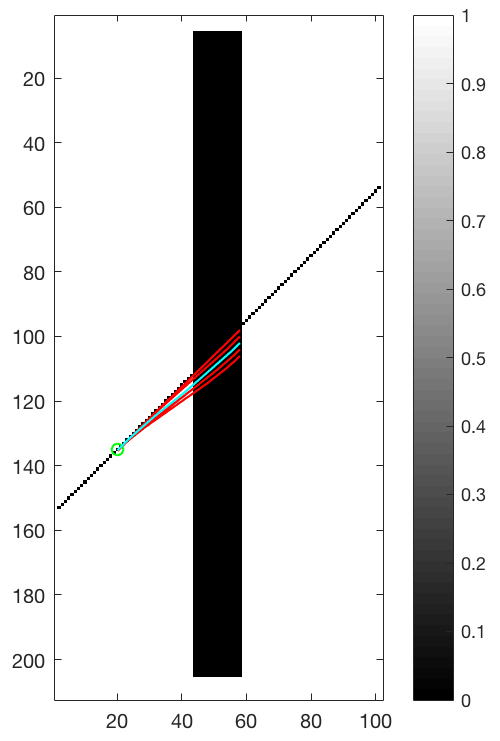} \qquad 
\includegraphics[width=0.17\linewidth]{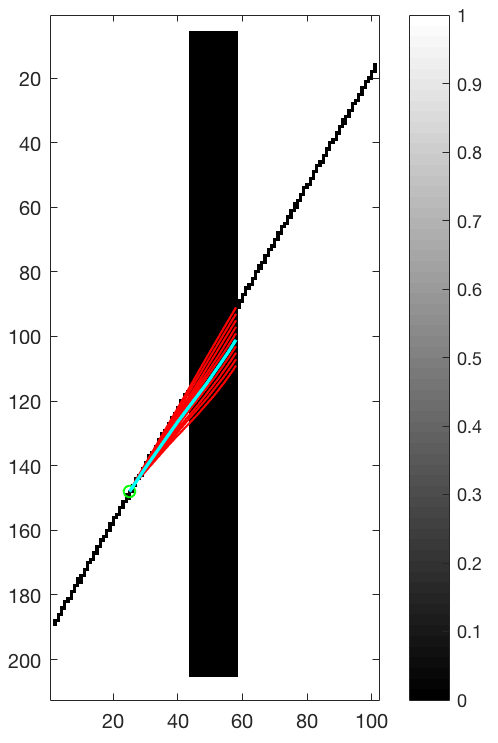} \qquad 
\includegraphics[width=0.17\linewidth]{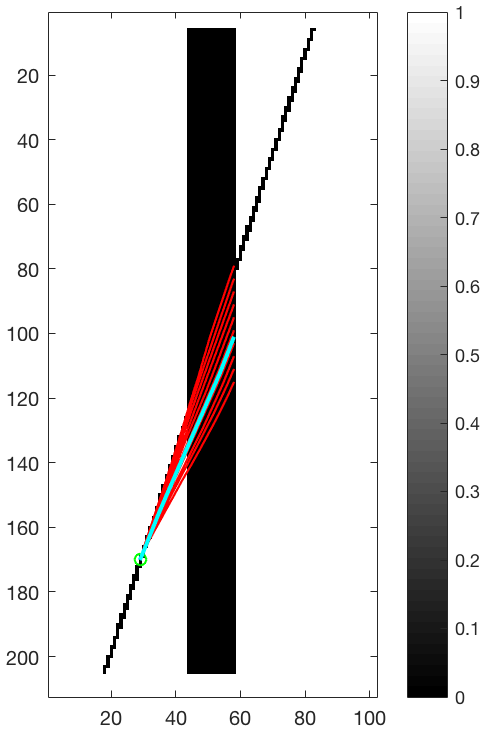} \\ \includegraphics[width=0.17\linewidth]{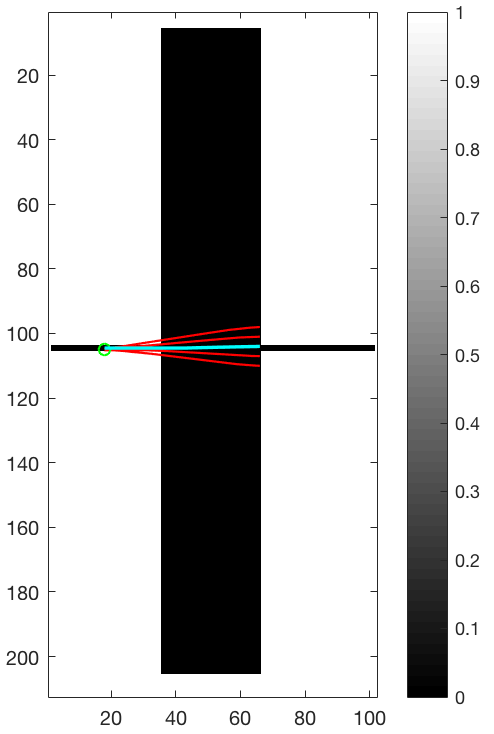} \qquad 
\includegraphics[width=0.17\linewidth]{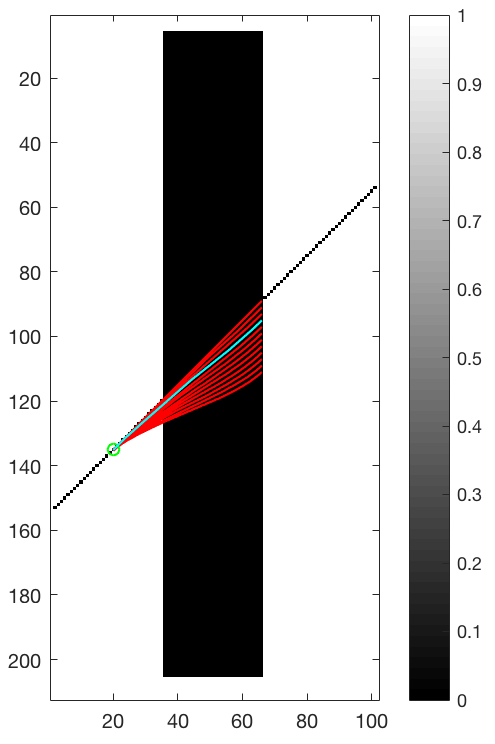} \qquad 
\includegraphics[width=0.17\linewidth]{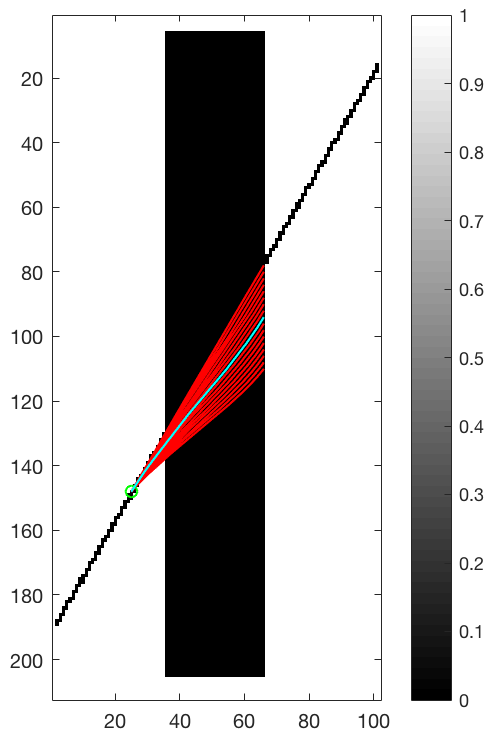} \qquad 
\includegraphics[width=0.17\linewidth]{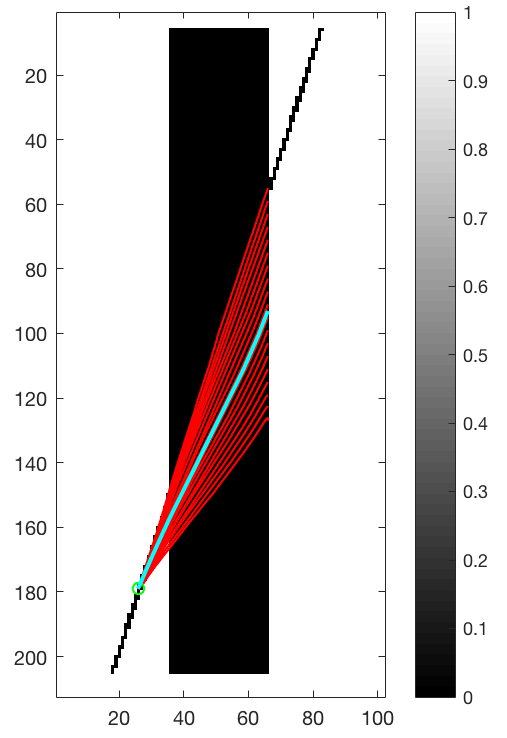} \\ 
\includegraphics[width=0.17\linewidth]{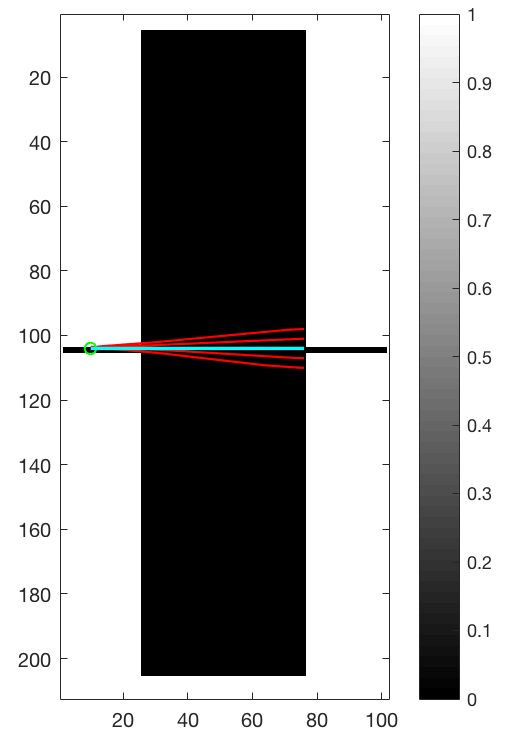}   \qquad  
\includegraphics[width=0.17\linewidth]{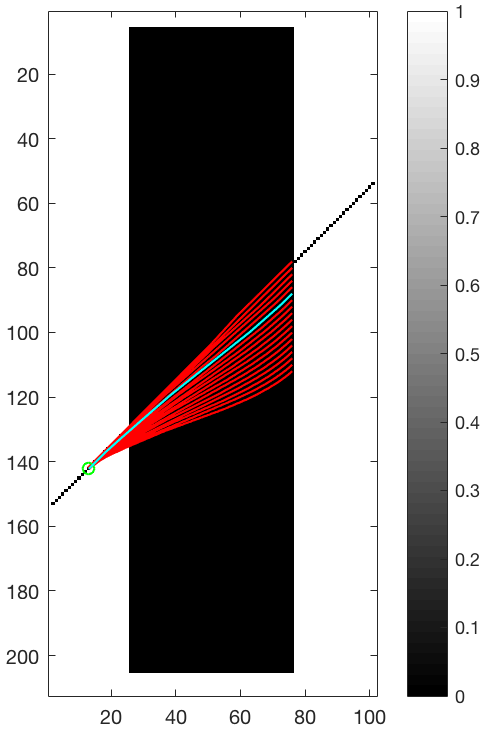} \qquad  
\includegraphics[width=0.17\linewidth]{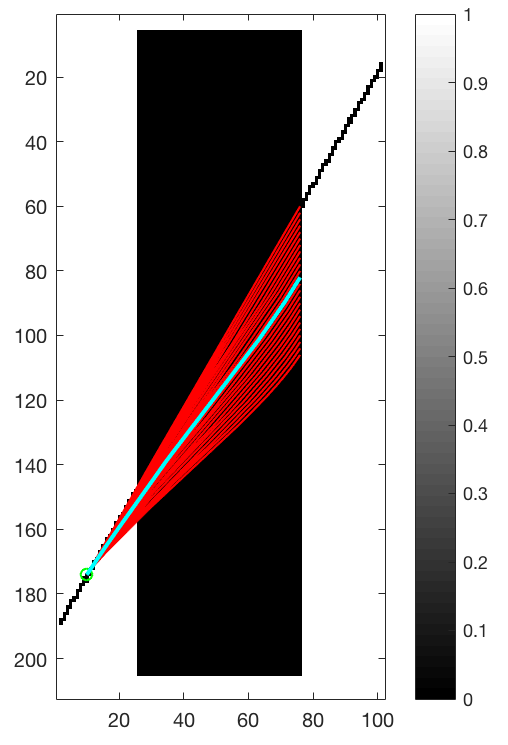} \qquad  
\includegraphics[width=0.17\linewidth]{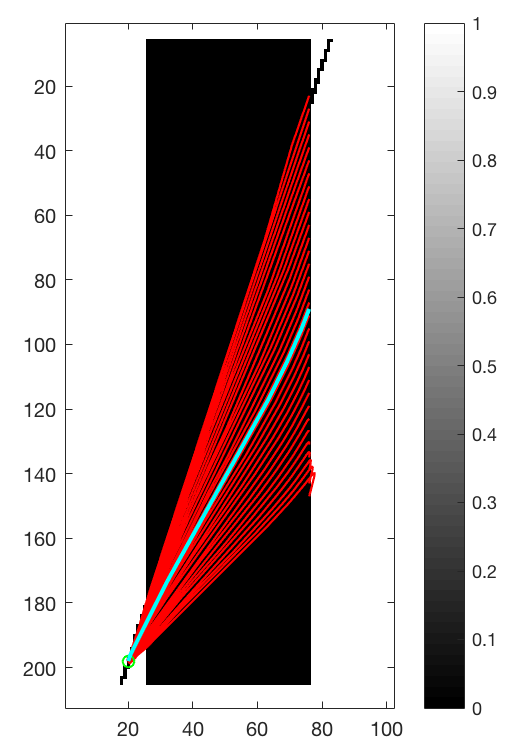}  \\

\caption{Poggendorff stimuli processed with their corresponding computed geodesics overlapped. In red we show an undersampling of geodesics computed from the left entry transversal to the right side of the central surface. Varying the orientation {\bf from Left to Right:} $\theta = \pi/2, \pi/4, \pi/6, \pi/10$ and with fixed width, {\bf Top:}  $7$ pixels, {\bf Central:}  $15$ pixels, {\bf Bottom:} $25$ pixels} \label{geods_poggendorff}
\end{figure}

\begin{figure}[htbp]
	\centering
	\includegraphics[width=0.17\linewidth]{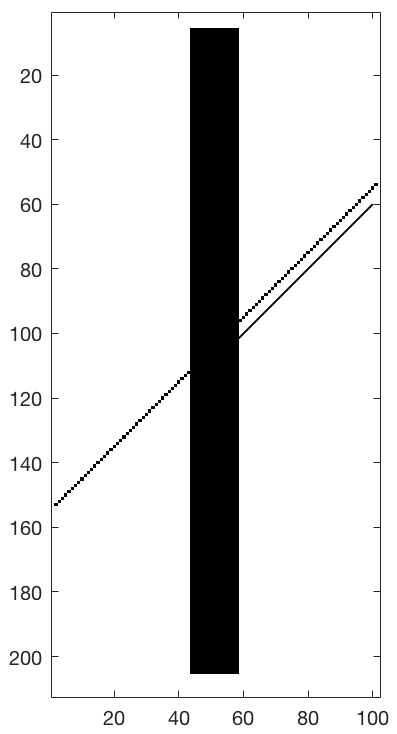} \qquad 
	\includegraphics[width=0.17\linewidth]{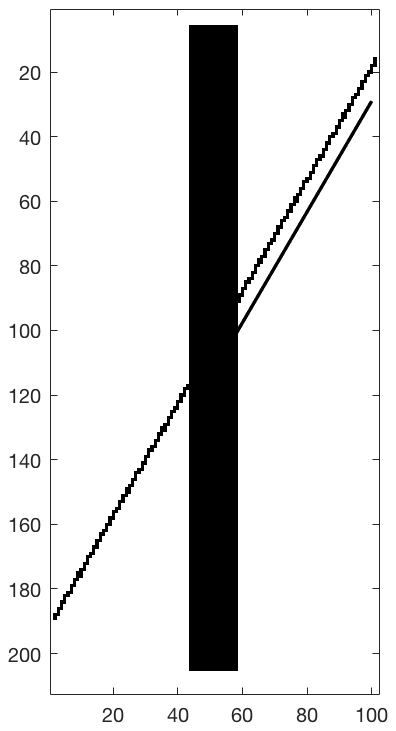} \qquad 
	\includegraphics[width=0.17\linewidth]{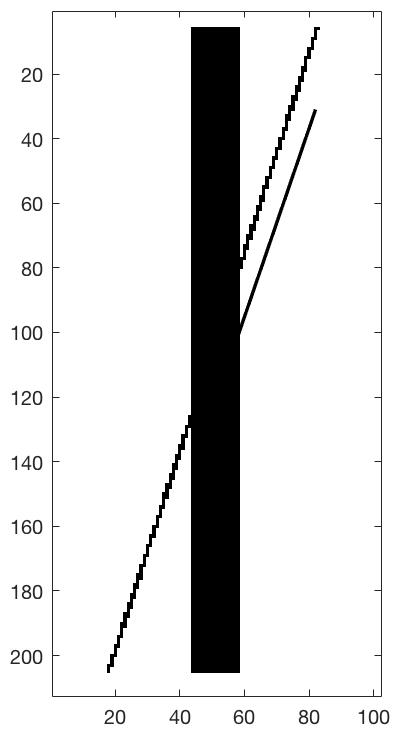} \\ 
	\includegraphics[width=0.17\linewidth]{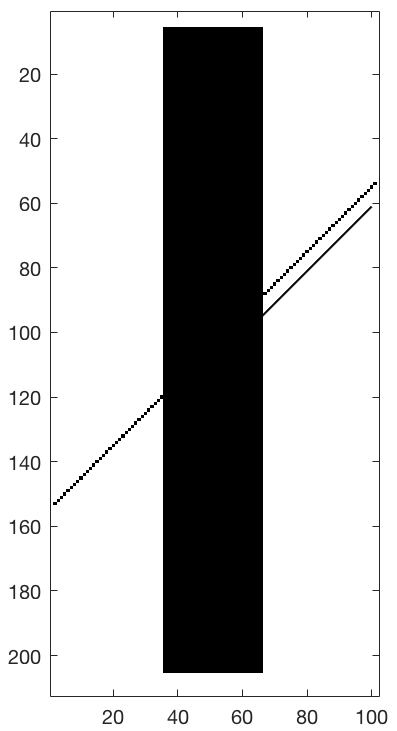} \qquad 
	\includegraphics[width=0.17\linewidth]{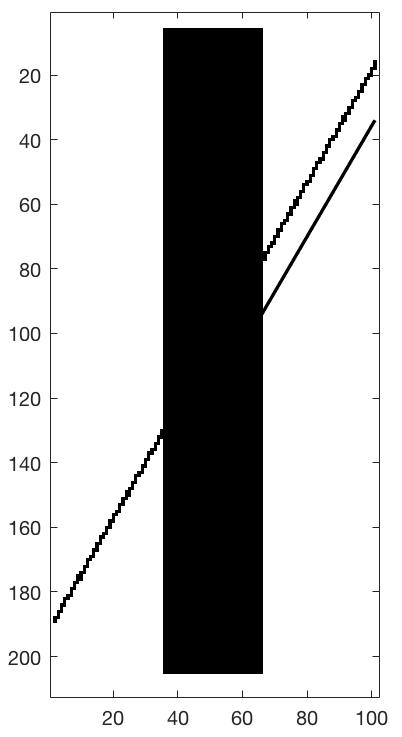} \qquad 
	\includegraphics[width=0.17\linewidth]{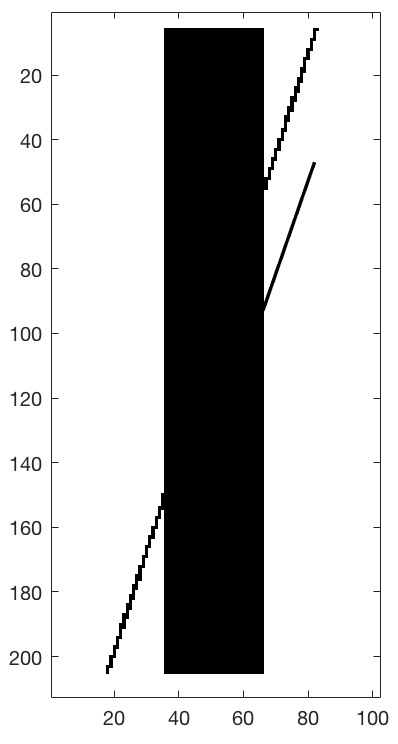} \\ 
	\includegraphics[width=0.17\linewidth]{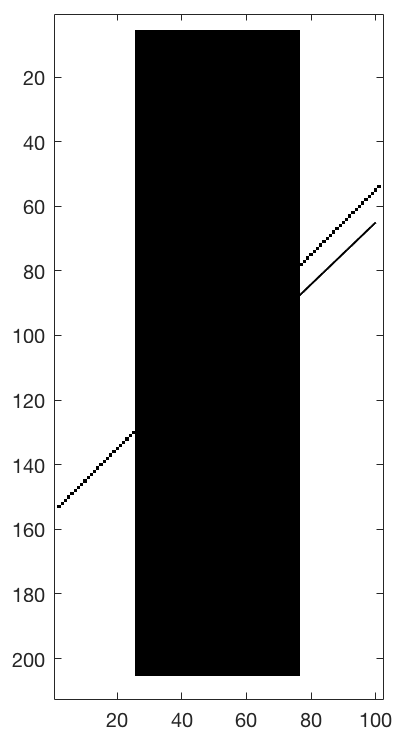} \qquad  
	\includegraphics[width=0.17\linewidth]{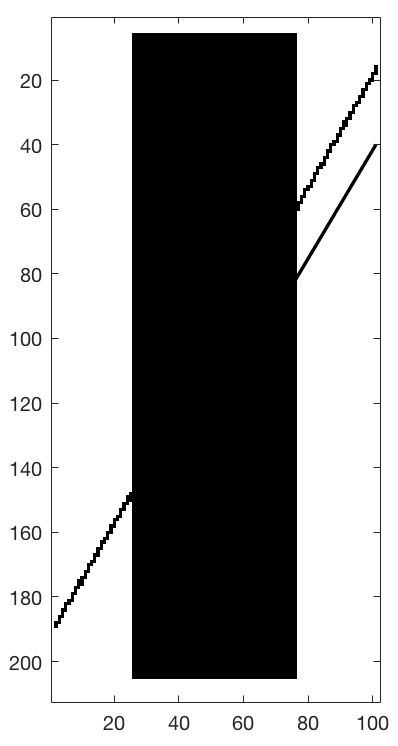} \qquad  
	\includegraphics[width=0.17\linewidth]{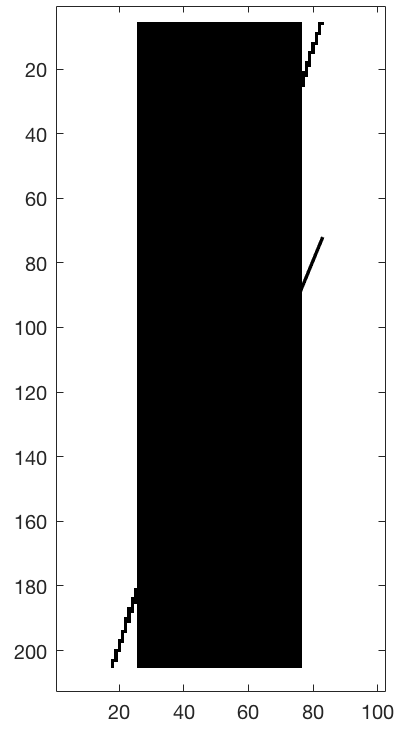} 	
	\caption{Reconstruction of the perceptual bars, from the $y$ coordinate corresponding to the cyan curve of \ref{geods_poggendorff}. Varying the orientation {\bf from Left to Right}: $\theta = \pi/4, \pi/6, \pi/10 $ for fixed width {\bf Top:}  $7$ pixels, {\bf Central:} $15$ pixels, {\bf Right:}  $25$ pixels} \label{poggendorff_reconstruction_stimuli}
\end{figure}

\subsubsection{Discussion} 

In this paragraph we show a table reporting the collected data concerning the SR lengths of the computed curve. It refers to the change of length varying the widths and angles, underlining the observed phenomena.\\
\newline
\begin{tabular}{l*{3}{|c}r}
Type of curve  & Width = 7 pixels & Width = 15 pixels & Width = 25 pixels \\
\hline
Percep. curve $\theta = \pi/4$ &  1.52 & 1.583 & 1.651 \\
Actual curve $\theta = \pi/4$ & 1.545 & 1.543 & 1.58 \\
Percep. curve $\theta = \pi/6$ & 1.007 & 1.309 & 1.173  \\
Actual curve $\theta = \pi/6$ & 1.111 & 1.173 & 1.044  \\
Percep. curve $\theta = \pi/10$ & 0.7194 & 0.9748 & 1.267  \\
Actual curve $\theta = \pi/10$ & 0.5944 & 0.7817 & 0.9503  \\
\label{tab_length_path}
\end{tabular}
\subsection{Round Poggendorff illusion}
Now we consider a variant of the Poggendorff illusion, called Round Poggendorff, see Fig.~\ref{Round_pogg}, left. The presence of the central surface induces a misperception: the arches do not seem cocircular and the left arc seems to be projected to some point with a certain orientation on the left bar. As in the previous example, we provide a terminal set to the SR-Fast Marching: the seed is fixed at the crossing point between the right arc and the right bar, $\xi = 2.5$ and possible terminal orientations belong to $[-\pi/10,0]$, where $\theta = 0$ is the angle corresponding to the orthogonal projection over the left bar and $\theta = -\pi/10$ is the boundary condition of the circle at crossing point with the left bar. The SR length of the geodesic is 1.32668 and the corresponding computed end point is $\{0.3, 0.88, -0.27\}$.

\begin{figure}[htpb]
\centering
\includegraphics[width=\textwidth]{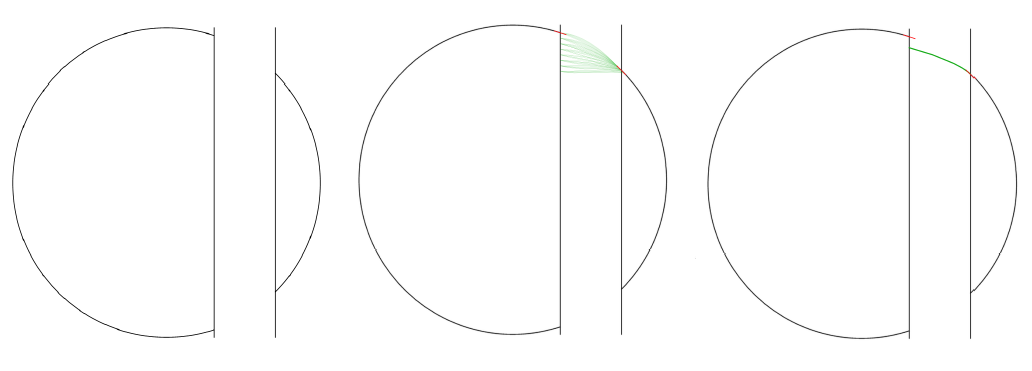}
\caption{{\bf Left:} Round Poggendorff illusion, courtesy of Talasli et al. see~\cite{talasli2015applying}. {\bf Center:} A family of geodesics starting from $(x_0,y_0,\th_0)$ with multiple endpoints. The aim is to determine $(y,\th)$ minimizing the length of the perceptual curve. {\bf Right:} A minimizer has end point $(y,\th) = (0.88, -0.27)$} \label{Round_pogg} 
\end{figure}

\section{Conclusion}
In this paper a neuro-mathematical model for the geometrical optical illusions is presented, based on the functional architecture of V1. Perceptual curves arise as geodesics of a polarized metric in $\SE$, directly induced by the visual stimulus. The geodesics are computed through SR-FM and the perceptual curves result to be shorter (w.r.t. SR-metric) than the corresponding geometrical continuation. The model has been compared with psychophysical evidences which explain how the effect varies depending on the width of the central surface and the angle of the transversal. Improving the understanding of these phenomena is very important because it can lead to insights about the behaviour of the visual cortex \cite{eagleman2001visual}, allowing new applications to be developed.   
\section*{Acknowledgment and Contributions}

The research leading to these results has received funding from the People Programme (Marie Curie Actions) of the European Union's Seventh Framework Programme FP7/2007-2013/ under REA
grant agreement n607643. 

The work of Mashtakov is supported by the Russian Science Foundation under grant 17-11-01387 and performed in Ailamazyan Program Systems Institute of Russian Academy of Sciences. 
The work on the revised manuscript was partially supported by the European Community's Seventh Framework Programme (FP7/2007-2013) / ERC grant Lie Analysis, agr. n.~335555, and  by the European Community's project  Marie Skłodowska-Curie grant agreement GHAIA n. 777822.

We thank the anonymous reviewer for the valuable comments, which helped to considerably improve the quality of the manuscript. 
 We thank A.~Ardentov for the fruitful discussions on the revised manuscript.



\bibliography{Manuscript}
\bibliographystyle{ieeetr}

\end{document}